\theoremstyle{definition}
\newtheorem{definition}{Definition}
\newtheorem{theorem}{Theorem}
\newtheorem{corollary}[theorem]{Corollary}
\newtheorem{lemma}[theorem]{Lemma}
\newtheorem{prop}{Proposition}
\newenvironment{ap_lemma}[1]{\par\noindent{\bf Lemma~#1.} \em}{}
\newenvironment{ap_theorem}[1]{\par\noindent{\bf Theorem~#1.} \em}{}
\def\MC#1{{\mathcal #1}}
\def\MBB#1{{\mathbb #1}}
\def\MB#1{{\mathbf #1}}
\newcommand{\BIGLR}[3]{{\left#1#3\right#2}}
\newcommand{\BIGP}[1]{{\BIGLR{(}{)}{#1}}}
\newcommand{\BIGBP}[1]{{\BIGLR{\{}{\}}{#1}}}
\newcommand{\BIGC}[1]{{\BIGLR{|}{|}{#1}}}
\newcommand{\OP}[1]{{\operatorname{#1}}}
\title{Optimal Time-Convex Hull under the $L_p$ Metrics}
\author{
Bang-Sin Dai \\[6pt] \small Dep. of Computer Science \\ \small and Information Engineering \\ \small National Taiwan Univ. \\ \small Taiwan \\[2pt] \footnotesize \texttt{f94922074@ntu.edu.tw}
\and 
Mong-Jen Kao \\[6pt] \small Research Center for \\ \small IT Innovation \\ \small Academia Sinica \\ \small Taiwan \\[2pt] \footnotesize \texttt{mong@citi.sinica.edu.tw}
\and 
D.-T. Lee \\[6pt] \small Dep. of Computer Science \\ \small and Engineering \\ \small National Chung-Hsing Univ. \\ \small Taiwan \\[2pt] \footnotesize \texttt{dtlee@ieee.org}}
\date{}
\begin{document}

\maketitle

%%%%%%%%%%%%%%%%%%%%%%%%%%%%%%%%%%%%%%%
%%
%%

\begin{abstract}
We consider the problem of computing the time-convex hull of a point set under the general $L_p$ metric in the presence of a straight-line highway in the plane. The traveling speed along the highway is assumed to be faster than that off the highway, and the shortest time-path between a distant pair may involve traveling along the highway. 
The time-convex hull $\OP{TCH}(P)$ of a point set $P$ is the smallest set containing both $P$ and \emph{all} shortest time-paths between any two points in $\OP{TCH}(P)$.
In this paper we give an algorithm that computes the time-convex hull under the $L_p$ metric in optimal $\MC{O}\BIGP{n\log n}$ time for a given set of $n$ points and a real number $p$ with $1\le p \le \infty$.
\end{abstract}

%

%%%%%%%%%%%%%%%%%%%%%%%%%%%%%%%%%%%%%%%
%%
%%

\section{Introduction}

Path planning, in particular, shortest time-path planning, in complex transportation networks has become an important yet challenging issue in recent years. With the usage of heterogeneous moving speeds provided by different means of transportation, the \emph{time-distance} between two points, i.e., the amount of time it takes to go from one point to the other, is often more important than their straight-line distance.
With the reinterpretation of distances by the time-based concept, fundamental geometric problems such as convex hull, Voronoi diagrams, facility location, etc. have been reconsidered recently in depth and with insights~\cite{CHLiu12,Bae:2006:OCC:2172842.2172870,Abellanas:2003:VDS:846035.846044}.

%

%\smallskip

%

From the theoretical point of view, straight-line highways which provide faster moving speed and which we can enter and exit at any point is one of the simplest transportation models to explore.
% for transportation networks.
The speed at which one can move along the highway is assumed to be $v > 1$, while the speed off the highway is $1$.
Generalization of convex hulls in the presence of highways was introduced by Hurtado et al.~\cite{Hurtado99}, who suggested that the notion of convexity be defined by the inclusion of shortest time paths, instead of straight-line segments, i.e., a set $S$ is said to be {\bf convex} if it contains the shortest time-path between any two points of $S$. Using this new definition, the time-convex hull $\OP{TCH}(P)$ for a set $P$ is the closure of $P$ with respect to the inclusion of shortest time-paths.

%

%\smallskip

%

In following work, Palop~\cite{Palop03} studied the structure of $\OP{TCH}(P)$ in the presence of a highway and showed that it is composed of convex clusters possibly together with segments of the highway connecting all the clusters.
A particularly interesting fact implied by the hull-structure is that, the shortest time-path between each pair of inter-cluster points must contain a piece of traversal along the highway, while similar assertions do not hold for intra-cluster pairs of points:
A distant pair of points $(p,q)$ whose shortest time-path contains a segment of the highway could still belong to the same cluster, for there may exist other points from the same cluster whose shortest time-path to either $p$ or $q$ does not use the highway at all.
This suggests that, the structure of $\OP{TCH}(P)$ in some sense 
%serves as an indicator to
indicates the degree of convenience provided by the underlying transportation network.
We are content with clusters of higher densities, i.e., any cluster with a large ratio between the number of points of $P$ it contains and the area of that cluster.
% are large.
%
For sparse clusters, we may want to break 
%the cluster
them and benefit distant pairs they contain by enhancing the transportation infrastructure.

%

%\smallskip

%

The approach suggested by Palop~\cite{Palop03} for the presence of a highway involves enumeration of shortest time-paths between all pairs of points and hence requires $\Theta\BIGP{n^2}$ time, where $n$ is the number of points.
% under consideration.
%
This problem was later studied by Yu and Lee~\cite{Yu:2007:TCH:1270397.1271493}, who proposed an approach based on incremental point insertions in a highway-parallel monotonic order. However, the proposed algorithm does not return the correct hull in all circumstances as particular cases were overlooked.
The first sub-quadratic algorithm was given by Aloupis et al.~\cite{Aloupis:2010:HHR:1598101.1598676}, who proposed an $\MC{O}\BIGP{n\log^2n}$ algorithm for the $L_2$ metric and an $\MC{O}\BIGP{n\log n}$ algorithm for the $L_1$ metric, following the incremental approach suggested by~\cite{Yu:2007:TCH:1270397.1271493} with careful case analysis.
%
%N
To the best of our knowledge, no previous results regarding metrics other than $L_1$ and $L_2$ were presented. 
%for $p > 2$.

%

\paragraph{Our Focus and Contribution.}

In this paper we address the problem of computing the time-convex hull of a point set in the presence of a straight-line highway under the $L_p$ metric for a given real number $p$ with $1\le p\le \infty$.
First, we adopt the concept of \emph{wavefront propagation}, a notion commonly used for path planning~\cite{CHLiu12,Aichholzer:2002:QPS:513400.513420}, and derive basic properties required for depicting the hull structure under the general $L_p$ metric.
When the shortest path between two points is not uniquely defined, e.g., in $L_1$ and $L_\infty$ metrics, we propose a re-evaluation on the existing definition of convexity.
Previous works concerning convex hulls under metrics other than $L_2$, e.g., Ottmann et al.~\cite{Ottmann1984157} and Aloupis et al.~\cite{Aloupis:2010:HHR:1598101.1598676}, assume a particular path to be taken when multiple choices are available.
However, this assumption allows the boundary of a convex set to contain reflex angles, which in some sense deviates from the intuition of a set being convex.

%

%\smallskip

%

In this work we adopt the definition that requires a convex set to include \emph{every shortest path} between any two points it contains.
Although this definition fundamentally simplifies the shapes of convex sets for $L_1$ and $L_\infty$ metrics, we show that the nature of the problem is not altered when time-based concepts are considered. 
In particular, the problem of deciding whether any pair of the given points belong to the same cluster under the $L_p$ metric requires $\Omega\BIGP{n\log n}$ time under the algebraic computation model~\cite{Ben-Or:1983:LBA:800061.808735}, for all $1\le p\le \infty$.

%

%\smallskip

%

Second, we provide an optimal $\MC{O}\BIGP{n\log n}$ algorithm for computing the time-convex hull for a given set of points.
The known algorithm due to Aloupis et al.~\cite{Aloupis:2010:HHR:1598101.1598676} stems from a scenario in the cluster-merging step where we have to check for the existence of intersections between a line segment and a set of convex curves composed of parabolae and line segments, which leads to their $\MC{O}\BIGP{n\log^2n}$ algorithm. 
In our paper, we tackle this situation by making an observation on the duality of cluster-merging conditions and reduce the problem to the geometric query of deciding if any of the given points lies above a line segment of an arbitrary slope.
This approach greatly simplifies the algorithm structure and can be easily generalized to other $L_p$-metrics for $1\le p\le \infty$.
For this particular geometric problem, we use a 
%semi-dynamic 
data structure due to Guibas et al~\cite{Guibas:1990:CIT:320176.320195} 
%which answers the
to answer this query in logarithmic time. 
%These together
All together this yields our $\MC{O}\BIGP{n\log n}$ algorithm.
We remark that, although 
%the definition of convexity we adopt
our adopted definition of convexity
% the revised definition of convexity which 
simplifies the shape of convex sets under the $L_1$ and the $L_\infty$ metrics, the algorithm we propose does not 
%depend on
take 
%the 
advantage of this specific property and also works for the original notion for which only a particular path is to be included.

%

%\paragraph{Related Work.}

%\mong{}
%

%%
%%

%%%%%%%%%%%%%%%%%%%%%%%%%%%%%%%%%%%%%%%
%%
%%

\section{Preliminaries}

In this section, we give precise definitions of the notions as well as sketches of previously known properties that are essential to present our work.
%
%\paragraph{$L_p$-metrics, highways, shortest time-paths, and walking region of a point.}
%
We begin with the general $L_p$ distance metric and basic time-based concepts.

\begin{definition}[Distance in the $L_p$-metrics]
For any real number $p \ge 1$ and any two points $q_i, q_j \in \MBB{R}^n$ with coordinates $\BIGP{i_1,i_2,\ldots,i_n}$ and $\BIGP{j_1,j_2,\ldots,j_n}$,
the distance between $q_i$ and $q_j$ under the $L_p$-metric is defined to be
$d_p\BIGP{q_i,q_j} = \BIGP{\sum_{k=1}^n \BIGC{i_{k}-j_{k}}^{p}}^{\frac{1}{p}}.$
\end{definition}
Note that when $p$ tends to infinity, $d_p(q_i,q_j)$ converges to $\max_{1\le k\le n}\BIGC{i_k-j_k}$.
% as $p$ tends to infinity. 
This gives the definition of the distance function in the $L_\infty$-metric, which is $d_\infty(q_i,q_j) = \max_{1\le k\le n}\BIGC{i_k-j_k}$.
For the rest of this paper, we use the subscript $p$ to indicate the specific $L_p$-metric, and the subscript $p$ will be omitted when there is no ambiguity.

\smallskip

%

%\begin{definition}
A \emph{transportation highway} $\MC{H}$ in $\MBB{R}^n$ is a hyperplane in which the moving speed in $\MC{H}$ is $v_\MC{H}$, where $1<v_\MC{H}\le \infty$, while the moving speed off $\MC{H}$ is assumed to be %$v_{\overline{\MC{H}}}= 1$.
\emph{unit}.
%\end{definition}
% to represent and model a transportation highway in $R^{n}$ here. 
%Suppose that the transportation speed $V_{H}$ on $H$ is faster than the speed $V_{P}$ moving in other plane region. 
%
Given the moving speed in the space, we can define the \emph{time-distance} between any two points in $\MBB{R}^n$.
%
%Then, a path in $R^{n}$ under $L_{p}$-metric involves the time cost spending in the travel along the path due to the relationship between length, velocity, and time. Thus, the concept of the $shortest$ $time$ $path$ [] can be introduced and defined as follows.

\begin{definition}
For any $q_i,q_j \in \MBB{R}^n$, a continuous curve $\MC{C}$ connecting $q_i$ and $q_j$ is said to be a \emph{shortest time-path} if the traveling time required along $\MC{C}$ is minimum among all possible curves connecting $q_{i}$ and $q_{j}$.
The traveling time 
%$\OP{STP}(q_i,q_j)$ 
required along $\MC{C}$ is referred to as the time-distance between $q_i$ and $q_j$, denoted $\hat{d}(q_i,q_j)$.
\end{definition}

For any two points $q_i$ and $q_j$, 
let $\OP{STP}(q_i,q_j)$ denote the set of shortest time-paths between $q_i$ and $q_j$.
For any $\MC{C} \in \OP{STP}(q_i,q_j)$, we say that $\MC{C}$ \emph{enters the highway $\MC{H}$} if 
$\MC{C} \cap \MC{H} \ne \emptyset$.
%, i.e., at some point $\MC{C}$ enters the highway $\MC{H}$.
%it contains a piece of traversal of non-zero length along $\MC{H}$, i.e., $\MC{C} \cap \MC{H}$ contains more than one point.
%
The \emph{walking-region} of a point $q \in \MBB{R}^n$, denoted $\OP{WR}(q)$, is defined to be the set of points whose set of shortest time-paths to $q$ contains a time-path that does not enter the highway $\MC{H}$.
For any $\MC{C} \in \OP{STP}(q_i,q_j)$, we say that $\MC{C}$ \emph{uses the highway $\MC{H}$} if 
$\MC{C} \cap \MC{H}$ contains a piece with non-zero length, i.e., at some point $\MC{C}$ enters the highway $\MC{H}$ and walks along it.

\paragraph{Convexity and time-convex hulls.}
In classical definitions, a set of points is said to be \emph{convex} if it contains every line segment joining each pair of points in the set, and the convex hull of a set of points $Q \subseteq \MBB{R}^n$ is 
%defined to be 
the minimal convex set containing $Q$.
%
%\begin{definition}[Convex hull of a point set]
%For a set of points $Q \in \MBB{R}^n$, the 
%classical definition of 
%convex hull of $Q$, denoted $\OP{CH}(Q)$, is the minimal point set $S$ that contains $Q$ such that for any $q_i, q_j \in S$, we have $\overline{q_iq_j} \in S$.
%the line segment joining any two points in $S$ is also contained in $S$.
%\end{definition}
%
%\smallskip
%
When time-distance is considered, the concept of convexity as well as convex hulls with respect to time-paths 
%can be 
is defined analogously.
A set of points is said to be \emph{convex with respect to time}, or, \emph{time-convex}, if it contains every shortest time-path joining each pair of points in the set.

\begin{definition}[Time-convex hull]
The time-convex hull, of a set of points $Q \subseteq \MBB{R}^n$, denoted $\OP{TCH}(Q)$, is the minimal time-convex set containing $Q$.
\end{definition}

Although the aforementioned concepts 
%introduced above 
are defined in $\MBB{R}^n$ space, 
in this paper we work in $\MBB{R}^2$ plane with an axis-parallel highway placed on the $x$-axis %for the rest of this paper 
as higher dimensional space does not give further insights:
%, i.e., 
When considering the shortest time-paths between two points in higher dimensional space, it suffices to consider the specific plane that is orthogonal to $\MC{H}$ and that contains the two points.
%
%For brevity of presentation, for the rest of this paper we use
%

%%
%%

\paragraph{Time-convex hull under the $L_1$ and the $L_2$ metrics.}

The structure of time-convex hulls under the $L_1$ and the $L_2$ metrics has been studied in a series of work~\cite{Aloupis:2010:HHR:1598101.1598676,Palop03,Yu:2007:TCH:1270397.1271493}. Below we review important properties.
See also Fig.~\ref{fig-l1-l2-walking-region} for an illustration.

\begin{figure*}[b]
%\fbox
%\vspace{-10pt}
{\includegraphics[scale=0.8]{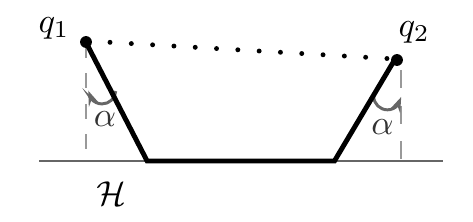}}
{\includegraphics[scale=0.8]{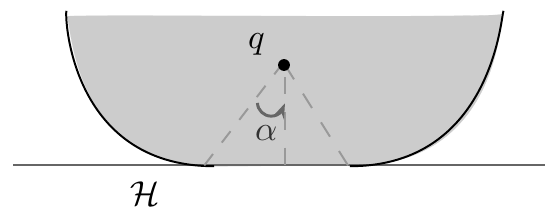}}
%\hfill
{\includegraphics[scale=0.8]{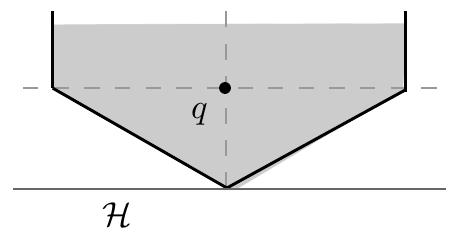}}
%\vspace{-20pt}
\caption{(a) The only two possible paths for being a shortest time-path between $q_1$ and $q_2$ in $L_2$. (b)(c) The walking regions of a point $q \in \MBB{R}^2$ under $L_2$ and $L_1$, respectively. 
%(c) The walking region of a point $q \in \MBB{R}^2$ in $L_1$.
}
\label{fig-l1-l2-walking-region}
\end{figure*}

\begin{prop}[\cite{Palop03,Yu:2007:TCH:1270397.1271493}] \label{prop-l2-wr}
For the $L_2$-metric and any point $q = (x_q,y_q)$ with $y_q \ge 0$, we have the following properties.
\begin{compactenum}
	\item
		If a shortest time-path starting from $q$ uses the highway $\MC{H}$, then it must enter the highway with an incidence angle $\alpha = \arcsin 1/v_\MC{H}$ toward the direction of the destination.
		
%	\item
%		the shortest time-path between any two points $q_1, q_2 \in \MBB{R}^2$ is either the straight line segment $\overline{q_1q_2}$ or

	\item
		The walking region of a point $q \in \MBB{R}^2$ is characterized by the following two parabolae:
%		\begin{compactenum}
%			\item
				(a) \emph{right discriminating parabola}, which is the curve satisfying 
%				\vspace{-4pt}
				\begin{align*}
				\begin{cases}
					x\ge x_q + y_q\tan\alpha, \quad \text{and} \\%[4pt]
					\sqrt{\BIGP{x-x_q}^2+\BIGP{y-y_q}^2} = y_q\sec\alpha + y\sec\alpha %\\
					%\hspace{4.8cm} 
					+ \frac{1}{v_\MC{H}}\BIGP{\BIGP{x-y\tan\alpha}-\BIGP{x_q+y_q\tan\alpha}}.
%					\vspace{-6pt}
				\end{cases}
				\end{align*}
%			
%			\item
%				\emph{right half-line}, starting at $(x_q+y_q\tan\alpha, 0)$ shooting downward. % the direction of %the vector ${\BIGP{\tan\alpha,1}}$.
%		\end{compactenum}
		%
		(b) The \emph{left discriminating parabola} 
		%and \emph{left half-line} are
		is symmetric to the right discriminating parabola with respect to the line $x=x_q$. 
		%See also Fig.~\ref{fig-l1-l2-walking-region}~(b). % for an illustration.
\end{compactenum}
\end{prop}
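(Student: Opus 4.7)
The plan is to prove both parts by a standard variational argument (Fermat's principle / Snell's law) adapted to the two-speed medium induced by the highway.

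For part (1), I first observe that any shortest time-path must consist of straight-line segments in the off-highway region, since the unit-speed metric there is Euclidean and a straight line is the unique shortest curve between two off-highway points. Suppose a shortest time-path starting from $q = (x_q, y_q)$ uses $\MC{H}$, and let $e = (x_e, 0)$ be its entry point. Write $\beta$ for the angle between the incoming segment $\overline{qe}$ and the normal of $\MC{H}$. The travel time along the off-highway portion $\overline{qe}$ is $y_q \sec\beta$, while the contribution of the subsequent highway-travel starting at $e$ depends on $x_e$ linearly with rate $\pm 1/v_\MC{H}$ (the sign determined by the direction of traversal toward the destination). Setting the derivative of the total travel time with respect to $x_e$ equal to zero yields the optimality condition $\sin\beta = 1/v_\MC{H}$, so $\beta = \alpha = \arcsin (1/v_\MC{H})$; the horizontal offset $y_q\tan\alpha$ is oriented toward the destination, as claimed.

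For part (2), the walking region $\OP{WR}(q)$ is by definition the set of targets whose direct straight-line time from $q$ is no greater than the time of any path that uses $\MC{H}$. By part (1), for a destination $(x,y)$ with $y \ge 0$ lying to the right of $q$, the optimal highway-using path enters $\MC{H}$ at $(x_q + y_q \tan\alpha, 0)$, traverses to $(x - y\tan\alpha, 0)$, and exits at angle $\alpha$; its total time is
\[
T_\MC{H}(x,y) \;=\; y_q \sec\alpha + y \sec\alpha + \tfrac{1}{v_\MC{H}} \BIGP{(x - y\tan\alpha) - (x_q + y_q \tan\alpha)}.
\]
The right portion of the boundary of $\OP{WR}(q)$ is therefore the locus where $\sqrt{(x-x_q)^2 + (y-y_q)^2} = T_\MC{H}(x,y)$, subject to the feasibility constraint $x \ge x_q + y_q \tan\alpha$ that forces the highway-traversal segment to have non-negative length and proceed rightward; this yields the right discriminating parabola exactly as stated. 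Squaring once eliminates the radical and shows that the equation defines a conic; since $T_\MC{H}$ is linear in $(x,y)$ while the left-hand side is a Euclidean distance, this is the classical focus-directrix characterization of a parabola, with focus at $q$. The left discriminating parabola is then obtained by reflecting the entire construction across $x=x_q$, swapping the roles of the entry/exit points and giving the claimed symmetric curve.

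The main technical point I expect to verify carefully is the \emph{completeness} of the characterization, namely that no other arcs arise on the boundary of $\OP{WR}(q)$. Any highway-using candidate path with a different entry angle is dominated by one with angle $\alpha$ by part (1); any candidate entering $\MC{H}$ on the ``wrong'' side of the destination is dominated by a direct path or by the opposite-side candidate. This leaves exactly the two discriminating parabolae together with the region where the feasibility constraint fails (where the direct path is trivially faster). The remainder is routine algebraic simplification that recovers the closed form given in the proposition statement.
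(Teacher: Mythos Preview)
Your variational argument is correct and is in fact the classical derivation for the $L_2$ case. Note, however, that the paper does not prove this proposition at all: it is stated with citations to Palop and to Yu--Lee and is treated as known background. What the paper \emph{does} prove is the generalization to arbitrary $L_p$ (Lemmas~\ref{lemma-incidence-angle} and~\ref{lemma-wr-general}), and there the route is genuinely different from yours: rather than differentiating a travel-time function in the entry point, the paper builds up the wavefront $W_p(q,t)$ from a highway source (Lemma~\ref{lemma-wavefront-from-highway}) by taking the envelope of translated $p$-circles, reads off the tangent point $\hat{x}_t = t\cdot v_\MC{H}^{1/(1-p)}$ geometrically, and then uses the bisecting-set machinery (Lemma~\ref{lemma-shortest-time-path-bisecting-set}) to reconstruct shortest time-paths and deduce the incidence angle.

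Your Fermat/Snell approach is shorter and more direct for $L_2$, and it would also work for general $L_p$ with only minor changes (optimize $|x_e - x_q|^p + y_q^p)^{1/p} + (x_e)/v_\MC{H}$ over $x_e$). The wavefront approach buys the paper a uniform picture of the hull structure---the same envelope construction yields the shape of $W_p$, the incidence angle, and the discriminating curves in one stroke---and handles the degenerate metrics $p=1,\infty$ (where shortest paths are non-unique and the variational argument needs more care) without a separate case analysis. For the specific $L_2$ statement here, though, your argument is the natural one and nothing is missing.
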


\begin{prop}[\cite{Aloupis:2010:HHR:1598101.1598676}] \label{prop-l1-previous}
For the $L_1$-metric, the walking region of a point $q = (x_q,y_q)$ with $y_q \ge 0$ is formed by 
%the half-line starting at $(x_q,0)$ shooting downward and 
the intersection of the following regions: (a) the vertical strip $x_q - {y_q}/{\beta} \le x \le x_q - y_q / \beta$, and (b) $y \ge \pm \beta\BIGP{x-x_q}$, where $\beta = \frac{1}{2}\BIGP{1-\frac{1}{v_\MC{H}}}$.
\end{prop}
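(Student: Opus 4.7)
The plan is to directly compare the $L_1$ travel time of a path from $q = (x_q, y_q)$ that avoids $\MC{H}$ against the minimum travel time over paths that use $\MC{H}$, and to characterize the targets $(x,y)$ for which the former does not exceed the latter. Since $y_q \ge 0$, an $L_1$-shortest path from $q$ to $(x,y)$ staying in the closed upper half-plane exists whenever $y \ge 0$, and its time is $|x - x_q| + |y - y_q|$; conversely, when $y < 0$ every continuous path from $q$ to $(x,y)$ must cross $\MC{H}$, so such targets cannot lie in $\OP{WR}(q)$.

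To bound the optimal highway-using time I would consider a path that enters $\MC{H}$ at $(a,0)$, traverses $\MC{H}$ to $(b,0)$, and then exits to $(x,y)$, with total cost
\[
T(a,b) \;=\; |a - x_q| + y_q + \frac{|b-a|}{v_\MC{H}} + |x - b| + |y|.
\]
The function $T$ is convex and piecewise linear in $(a,b)$; on each cell determined by the sign patterns of $a - x_q$, $b - a$, and $x - b$, its partial derivatives have absolute value $1$ or $1/v_\MC{H}$, with the $1/v_\MC{H}$-slopes strictly dominated by the $1$-slopes because $v_\MC{H} > 1$. A short monotonicity argument then forces the minimum to be attained at $a = x_q$ and $b = x$, corresponding to descending straight down to $\MC{H}$ and exiting right below $(x,y)$, which yields
\[
T^{*} \;=\; y_q + \frac{|x - x_q|}{v_\MC{H}} + |y|.
\]

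I would then impose $|x - x_q| + |y - y_q| \leq T^{*}$; for $y \ge 0$ this simplifies, using $1 - 1/v_\MC{H} = 2\beta$, to
\[
2\beta\,|x - x_q| \;\leq\; y_q + y - |y - y_q|.
\]
A case split on the sign of $y - y_q$ gives two linear conditions: when $y \ge y_q$ the right-hand side equals $2y_q$ and produces the strip condition $|x - x_q| \le y_q/\beta$; when $0 \le y \le y_q$ it equals $2y$ and produces the wedge condition $y \ge \beta|x - x_q|$. A brief check shows that within each applicable $y$-range the other condition is automatically satisfied (e.g.\ $\beta|x-x_q| \le y_q \le y$ in the first range), so the walking region is exactly the intersection of the strip and the wedge, matching the proposition (with the evident typo in the upper bound of (a) corrected to $x_q + y_q/\beta$).

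The main obstacle is the reduction to the entry--exit pair $(a,b) = (x_q, x)$: because $T(a,b)$ is not smooth, one must enumerate the sign patterns of the three absolute-value pieces and separately dismiss the possibility $a \ge b$ of walking along $\MC{H}$ opposite to the target direction. On each cell, however, monotonicity closes the argument because the $L_1$ cost of any horizontal detour off the highway strictly exceeds the corresponding highway cost whenever $v_\MC{H} > 1$. After this reduction, the derivation of the strip-and-wedge characterization is a routine algebraic simplification.
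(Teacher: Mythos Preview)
The paper does not actually prove this proposition: it is quoted from Aloupis et al.\ as a known background result and no argument is given in either the body or the appendix. So there is no paper proof to match against, and your derivation stands on its own.

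Your argument is correct. The minimization of $T(a,b)$ to $a=x_q$, $b=x$ is sound (and in fact is the $L_1$ instance of the paper's general incidence-angle lemma, Lemma~\ref{lemma-incidence-angle}, which for $p=1$ gives $\alpha=0$, i.e.\ ``drop straight down''); the subsequent algebra and the two-case split on $y\gtrless y_q$ cleanly produce the wedge and strip, and your closing check that each condition implies the other in its range shows the region is exactly their intersection. You also correctly flag the sign typo in part~(a) of the stated proposition.

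Compared with the paper's general $L_p$ machinery (wavefront propagation $\to$ incidence angle $\to$ discriminating curves of Lemma~\ref{lemma-wr-general}), your route is more elementary and self-contained for $L_1$: you bypass the wavefront analysis entirely by exploiting the piecewise-linear convexity of the $L_1$ cost, at the price of an approach that does not generalize to other $p$. The only place to be slightly more careful is the boundary $y=0$, where ``entering $\MC{H}$'' in the paper's sense means merely touching it; this affects at most a measure-zero piece of the region and does not change the description in the proposition.
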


%

%%
%%

%%%%%%%%%%%%%%%%%%%%%%%%%%%%%%%%%%%%%%%
%%
%%

\section{Hull-Structure under the General $L_p$-Metrics}

In this section, we derive necessary properties to describe the structure of time-convex hulls under the general $L_p$ metrics. First, we adopt the notion of \emph{wavefront propagation}~\cite{CHLiu12,Aichholzer:2002:QPS:513400.513420}, which is a well-established model used in path planning, and derive the behavior of a shortest time-path between any two points. 
Then we show how the corresponding walking regions are formed, followed by a description of the desired structural properties.
%
%\mong{Below we assume that $1<p<\infty$. The case when $p = \infty$ is presented in the appendix for further reference. \mong{$p = v_\MC{H} = \infty$}}
%

\paragraph{Wavefronts and Shortest Time-Paths.}

For any $q \in \MBB{R}^2$, $t \ge 0$, and $p\ge 1$, the wavefront with \emph{source} $q$ and \emph{radius} $t$ under the $L_p$-metric is defined as 
%\vspace{-6pt}
$$W_p(q,t) = \BIGBP{s\colon s\in\MBB{R}^2, \hat{d}_p(q,s) = t}.$$
Literally, $W_p(q,t)$ is the set of points whose time-distances to $q$ are exactly $t$.
Fig.~\ref{fig-p-circles}~(a) shows the wavefronts, i.e., the ``unit-circles'' under the $L_p$ metric, or, the $p$-circles, for different $p$ with $0<p\le \infty$ when the highway is not used.
The shortest time-path between $q$ and any point $q^\prime \in W_p(q,t)$ is the trace on which $q^\prime$ moves as $t$ changes smoothly to zero,
%. This implies
which is a straight-line joining $q$ and $q^\prime$.

When the highway $\MC{H}$ is present and the time-distance changes, deriving the behavior of a shortest time-path that uses $\MC{H}$ becomes tricky.
Let $q_1, q_2 \in \MBB{R}^2$, $q_1 \ne q_2$, be two points in the plane, and let $\hat{t}_{\nicefrac{1}{2}}(q_1,q_2) \ge 0$ be the smallest real number such that 
$$\OP{Bisect}(q_1,q_2) \equiv W_p\BIGP{q_1,\hat{t}_{\nicefrac{1}{2}}(q_1,q_2)} \cap W_p\BIGP{q_2,\hat{t}_{\nicefrac{1}{2}}(q_1,q_2)} \ne \emptyset.$$
In other words, $\OP{Bisect}(q_1,q_2)$ is the set of points at which 
%the two wavefronts 
$W_p(q_1,t)$ and $W_p(q_2,t)$ meet for the first time. The following lemma shows that $\OP{Bisect}(q_1,q_2)$ characterizes the set of ``\emph{middle points}'' of all shortest time-paths between $q_1$ and $q_2$.

\begin{lemma} \label{lemma-shortest-time-path-bisecting-set}
For each $\MC{C} \in \OP{STP}(q_1,q_2)$, we have $\MC{C} \cap \OP{Bisect}(q_1,q_2) \ne \emptyset$.
%Each shortest time-path between $q_1$ and $q_2$ contains a point in $\OP{Bisect}(q_1,q_2)$,
Moreover, for each $q \in \OP{Bisect}(q_1,q_2)$, there exists $\MC{C}^\prime \in \OP{STP}(q_1,q_2)$ such that $q \in \MC{C}^\prime$.
%a shortest time-path betwee containing $q$.
\end{lemma}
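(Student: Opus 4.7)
The plan is to reduce everything to the observation that $\hat{t}_{\nicefrac{1}{2}}(q_1,q_2) = \hat{d}(q_1,q_2)/2$, after which both halves of the lemma follow by concatenating pieces of shortest time-paths. I first establish this identity with two matching inequalities. For the upper bound, pick any $q^{*} \in W_p(q_1,t^{*}) \cap W_p(q_2,t^{*})$, where $t^{*} := \hat{t}_{\nicefrac{1}{2}}(q_1,q_2)$; by the definition of $W_p$ there exist paths of time exactly $t^{*}$ from $q_1$ to $q^{*}$ and from $q^{*}$ to $q_2$, whose concatenation is a $q_1$-to-$q_2$ path of time $2t^{*}$, so $\hat{d}(q_1,q_2) \le 2t^{*}$. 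For the lower bound, I appeal to the continuity argument used below for part~1, which produces a meeting point of the two wavefronts at time $\hat{d}(q_1,q_2)/2$, forcing $t^{*} \le \hat{d}(q_1,q_2)/2$.

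For the first assertion of the lemma, fix $\MC{C} \in \OP{STP}(q_1,q_2)$ and parametrize it by arc-time from $q_1$. For every point $q$ traversed by $\MC{C}$, the optimality of $\MC{C}$ forces $\hat{d}(q_1,q) + \hat{d}(q,q_2) = \hat{d}(q_1,q_2)$ (otherwise one could splice in a faster segment and contradict $\MC{C}\in\OP{STP}(q_1,q_2)$). Consider the function $f(q) := \hat{d}(q_1,q) - \hat{d}(q,q_2)$ as $q$ moves continuously along $\MC{C}$; because the time-distance $\hat{d}$ is Lipschitz in its arguments (each single-step change in $q$ changes $\hat{d}(q_1,q)$ and $\hat{d}(q,q_2)$ by at most the time increment along $\MC{C}$), the map $f$ is continuous, with $f(q_1) = -\hat{d}(q_1,q_2) < 0$ and $f(q_2) = +\hat{d}(q_1,q_2) > 0$. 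The intermediate value theorem yields a point $q^{*} \in \MC{C}$ with $f(q^{*}) = 0$, i.e. $\hat{d}(q_1,q^{*}) = \hat{d}(q^{*},q_2) = \hat{d}(q_1,q_2)/2$. Hence $q^{*} \in W_p(q_1,\hat{d}(q_1,q_2)/2) \cap W_p(q_2,\hat{d}(q_1,q_2)/2)$, and by the identity established above this intersection is exactly $\OP{Bisect}(q_1,q_2)$.

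For the second assertion, fix $q \in \OP{Bisect}(q_1,q_2)$, so $\hat{d}(q_1,q) = \hat{d}(q,q_2) = t^{*}$. Take any $\MC{C}_1 \in \OP{STP}(q_1,q)$ and $\MC{C}_2 \in \OP{STP}(q,q_2)$; their concatenation $\MC{C}^\prime := \MC{C}_1 \cup \MC{C}_2$ is a continuous curve from $q_1$ to $q_2$ of time exactly $2t^{*} = \hat{d}(q_1,q_2)$. By the triangle inequality no faster curve exists, so $\MC{C}^\prime \in \OP{STP}(q_1,q_2)$ and $q \in \MC{C}^\prime$, as required.

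The one genuinely delicate point is the continuity argument in the first half: one must know that $\hat{d}$ is continuous in the relevant arguments and that the elapsed time along $\MC{C}$ varies continuously from $0$ to $\hat{d}(q_1,q_2)$ as $q$ sweeps along $\MC{C}$. Both are standard once $\MC{C}$ is interpreted as a rectifiable curve in the plane whose cost is integrated against the position-dependent speed (unit off $\MC{H}$, $v_{\MC{H}}$ on $\MC{H}$); I would record these facts as routine preliminaries on wavefront propagation. Everything else is just splicing and IVT.
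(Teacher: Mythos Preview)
Your proposal is correct and follows essentially the same approach as the paper: both proofs use the intermediate value theorem along a shortest time-path to locate a midpoint with $\hat{d}(q_1,q^{*}) = \hat{d}(q^{*},q_2) = \tfrac{1}{2}\hat{d}(q_1,q_2)$, establish $\hat{t}_{\nicefrac{1}{2}}(q_1,q_2) = \tfrac{1}{2}\hat{d}(q_1,q_2)$ via the same pair of inequalities, and handle the second assertion by concatenation. The only cosmetic difference is that you apply IVT to $f(q) = \hat{d}(q_1,q) - \hat{d}(q,q_2)$ whereas the paper applies it directly to $\hat{d}(q_1,q)$ and then bounds the other half; the content is the same.
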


Given the set $\OP{Bisect}(q_1,q_2)$, a shortest time-path between $q_1$ and $q_2$ can be obtained by joining $\MC{C}_1$ and $\MC{C}_2$, where $\MC{C}_1 \in \OP{STP}(q_1,q)$ and $\MC{C}_2 \in \OP{STP}(q,q_2)$ for some $q \in \OP{Bisect}(q_1,q_2)$.
% Lemma~\ref{lemma-shortest-time-path-bisecting-set} states that, \mong{blahblah...}
%
By expanding the process in a recursive manner 
%and 
%Expand this recursion and 
we get a set of middle points.
Although the cardinality of the set we identified is countable while any continuous curve in the plane contains uncountably infinite points, it is not difficult to see that, the set of points we locate is \emph{dense}\footnote{
%In classical analysis, 
\emph{Dense} is a concept used in classical analysis 
%that allows us 
to indicate that any element of one set can be approximated to any degree by elements of a subset being dense within.
}
in the underlying curve, and therefore can serve as a representative.

\smallskip

To describe the shape of a wavefront when the highway may be used, 
we need the following lemma regarding the propagation of wavefronts.

\begin{lemma} \label{lemma-wavefront-recursion}
Let $\MC{C}^\circ_p(q,t)$ denote the $p$-circle with center $q$ and radius $t$.
Then $W_p(q,t)$ is formed by the boundary of
%\vspace{-25pt}
$$\MC{C}^\circ_p(q,t) \cup \bigcup_{s\colon s\in \MBB{R}^2, \\ \hat{d}_p(q,s) < d_p(q,s) \le t} W_p\BIGP{s,t-\hat{d}_p(q,s)}.$$
\end{lemma}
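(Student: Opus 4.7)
The plan is to view $W_p(q,t)$ as the boundary of the time-ball $\hat{B}(q,t) := \{q^\prime \in \MBB{R}^2 : \hat{d}_p(q,q^\prime) \le t\}$, and to establish the set-theoretic identity
\begin{equation*}
  \hat{B}(q,t) \;=\; \MC{C}^\circ_p(q,t) \;\cup \bigcup_{s:\, \hat{d}_p(q,s) < d_p(q,s) \le t} B_p\bigl(s,\, t - \hat{d}_p(q,s)\bigr),
\end{equation*}
where $B_p(s,r)$ denotes the closed $L_p$-ball about $s$, so that $\partial B_p(s,r) = W_p(s,r)$. Since $\hat{d}_p(q,\cdot)$ is continuous, passing to the topological boundary on both sides then strips off each $B_p(s,\cdot)$-interior that is absorbed by another ball and leaves exactly the expression in the lemma.

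For the ``$\supseteq$'' direction, a point $q^\prime \in \MC{C}^\circ_p(q,t)$ is reachable by walking directly, witnessing $\hat{d}_p(q,q^\prime) \le d_p(q,q^\prime) \le t$. If instead $q^\prime \in B_p(s,\, t - \hat{d}_p(q,s))$ for some qualifying $s$, then concatenating an optimal time-path $q \to s$ with the straight $L_p$-segment $s \to q^\prime$ yields a time-path of total length at most $\hat{d}_p(q,s) + d_p(s,q^\prime) \le t$.

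For ``$\subseteq$'', I would fix $q^\prime \in \hat{B}(q,t)$ and take some $\MC{C} \in \OP{STP}(q,q^\prime)$. If $\MC{C}$ does not meet $\MC{H}$, then $\hat{d}_p(q,q^\prime) = d_p(q,q^\prime) \le t$ and $q^\prime \in \MC{C}^\circ_p(q,t)$. Otherwise let $s$ be the \emph{last} point of $\MC{C} \cap \MC{H}$; sub-path optimality forces the prefix $q \to s$ to realize $\hat{d}_p(q,s)$ and the suffix $s \to q^\prime$ to be a straight $L_p$-segment of time-length $d_p(s,q^\prime)$, so $d_p(s,q^\prime) \le t - \hat{d}_p(q,s)$. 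If additionally $\hat{d}_p(q,s) < d_p(q,s)$, then $s$ qualifies and $q^\prime \in B_p(s,\, t - \hat{d}_p(q,s))$; in the degenerate case $\hat{d}_p(q,s) = d_p(q,s)$, the triangle inequality gives $d_p(q,q^\prime) \le d_p(q,s) + d_p(s,q^\prime) = \hat{d}_p(q,s) + d_p(s,q^\prime) \le t$, placing $q^\prime$ already inside $\MC{C}^\circ_p(q,t)$.

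The main obstacle lies in this case analysis around the final highway visit: one must carefully invoke sub-path optimality to deduce that the suffix after the \emph{final} exit from $\MC{H}$ is a straight $L_p$-segment of time-length $d_p(s,q^\prime)$, and to check that restricting the index set by the strict qualifier $\hat{d}_p(q,s) < d_p(q,s)$ loses no generality. The remainder is bookkeeping at the level of topological boundaries.
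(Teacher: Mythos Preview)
There is a genuine gap. Your argument rests on the assertion that $\partial B_p(s,r) = W_p(s,r)$, i.e., that the time-wavefront about $s$ coincides with the ordinary $L_p$-sphere. But $W_p(s,r)$ is defined as $\{q'\colon \hat d_p(s,q')=r\}$, and whenever the highway is usable from $s$ this set is strictly larger than the $L_p$-sphere --- indeed, Lemma~\ref{lemma-wavefront-from-highway} describes $W_p(s,r)$ for $s\in\MC{H}$ and it contains the tangent segments $\OP{LT}_p,\OP{RT}_p$ extending out to $(\pm v_\MC{H} r,0)$. Because of this misidentification, the set identity you write down is simply false. For a concrete counterexample take $q=(0,0)\in\MC{H}$, $v_\MC{H}=2$, $t=1$, $q'=(2,0)$. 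Then $\hat d_p(q,q')=1$, so $q'\in\hat B(q,t)$. But $q'\notin \MC{C}^\circ_p(q,1)$, and for every $s$ with $\hat d_p(q,s)<d_p(q,s)\le 1$ one checks that $d_p(s,q')>1-\hat d_p(q,s)$ (for $s=(x,0)$ with $0<x\le 1$ the inequality $2-x\le 1-x/2$ would force $x\ge 2$), so $q'$ is not covered by any of your $L_p$-balls either.

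The same example shows a second, related gap in your ``$\subseteq$'' direction: you never verify the constraint $d_p(q,s)\le t$ on the index set. With your choice of $s$ as the \emph{last} highway point of an optimal path, this bound can fail (in the example above the last highway point is $q'$ itself, with $d_p(q,q')=2>t$). The lemma is genuinely recursive: the $W_p(s,\cdot)$ on the right are again time-wavefronts, so the corresponding filled regions are the time-balls $\hat B(s,\cdot)$, not $L_p$-balls. With that reading the identity $\hat B(q,t)=\MC{C}^\circ_p(q,t)\cup\bigcup_s \hat B\bigl(s,\,t-\hat d_p(q,s)\bigr)$ does hold, and for the ``$\subseteq$'' direction one should pick $s$ as an intermediate point of the optimal path at which $d_p(q,s)=t$ (such a point exists by continuity once $d_p(q,q')>t$), not the final highway exit. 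The paper's own proof, by contrast, takes this recursive description essentially for granted and addresses only the restriction of the index set: for $s$ with $\hat d_p(q,s)=d_p(q,s)$ the triangle inequality gives $\MC{C}^\circ_p(s,\,t-\hat d_p(q,s))\subseteq \MC{C}^\circ_p(q,t)$, so those terms may be dropped.
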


{
%Below
In the following we discuss the case when $1<p<\infty$ and leave the discussion of shortest time-paths in $L_\infty$ to the appendix for further reference.}
Let $\MC{H}$ be the highway placed on the x-axis with moving speed $v_\MC{H} > 1$.
%
%For the ease of presentation, f
For any $t \ge 0$, any $0\le x \le t$, and any $1<p<\infty$, we use $y_p(x,t) = \BIGP{t^p-\BIGC{x}^p}^{1/p}$ to denote the y-coordinate of the specific point on the p-circle with x-coordinate $x$.
Let $\hat{x}_t = t\cdot v_\MC{H}^{1/(1-p)}$.
We have the following lemma regarding $W_p(q,t)$. Also refer to Fig.~\ref{fig-p-circles}~(b) for an illustration.

\begin{figure*}[t]
%\fbox
\hspace{-12pt}\begin{minipage}{0.22\textwidth}
\includegraphics[scale=1.1]{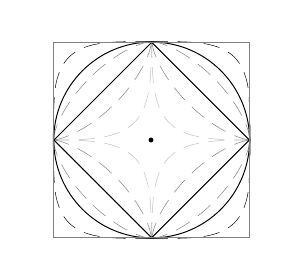}
\end{minipage}
\begin{minipage}{0.42\textwidth}
\includegraphics[scale=1.1]{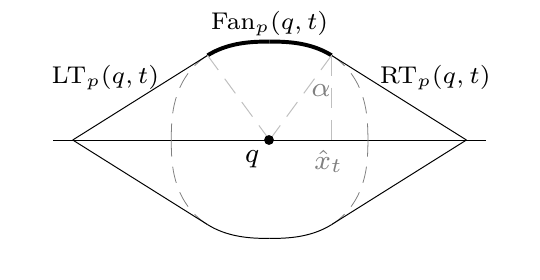}
\end{minipage}
\begin{minipage}{0.35\textwidth}
\includegraphics[scale=1.1]{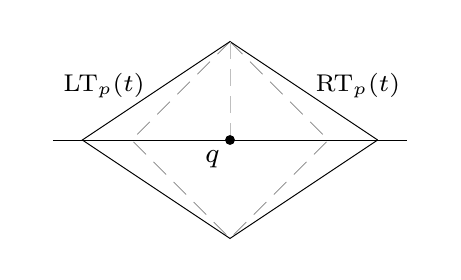}
\end{minipage}
%\vspace{-8pt}
\caption{(a) $p$-circles for different values of $p$: bold rhombus for $p=1$, bold circle for $p=2$, and bold square for $p=\infty$. (b) $W_p(q,t)$ for a point $q \in \MC{H}$, $v_\MC{H}<\infty$, and $p>1$, where the angle $\alpha$ satisfies $\sin\alpha = \frac{\hat{x}_t}{ \sqrt{\hat{x}_t^2+y_p(\hat{x}_t)^2} }$.
% = v_\MC{H}^{1/(1-p)}$. 
(c) $W_p(q,t)$ for $v_\MC{H}<\infty$ and $p = 1$.}
\label{fig-p-circles}
\end{figure*}

\begin{lemma} \label{lemma-wavefront-from-highway}
For $1<p<\infty$, $v_\MC{H} < \infty$, and a point $q \in \MC{H}$ which we assume to be $(0,0)$ for the ease of presentation, the upper-part of $W_p(q,t)$ that lies above $\MC{H}$ consists of the following three pieces:
\begin{compactitem}
	\item
		$\OP{Fan}_p(q,t)$: the circular-sector of the $p$-circle with radius $t$, ranging from $\BIGP{-\hat{x}_t,y_p(-\hat{x}_t,t)}$ to $\BIGP{\hat{x}_t,y_p(\hat{x}_t,t)}$.
	\item
		$\OP{LT}_p(q,t)$, $\OP{RT}_p(q,t)$: two line segments joining $\BIGP{-v_\MC{H}\cdot t,0}$, $\BIGP{-\hat{x}_t,y_p(-\hat{x}_t,t)}$, and $\BIGP{\hat{x}_t,y_p(\hat{x}_t,t)}$, $\BIGP{v_\MC{H}\cdot t,0}$, respectively. Moreover, $\OP{LT}_p(q,t)$ and $\OP{RT}_p(q,t)$ are tangent to $\OP{Fan}_p(q,t)$.
\end{compactitem}
The lower-part that lies below $\MC{H}$ follows symmetrically.
For $v_\MC{H} = \infty$, the upper-part of $W_p(q,t)$ consists of a horizontal line $y = t$.
\end{lemma}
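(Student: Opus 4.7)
The plan is to obtain $W_p(q,t)$ by computing $\hat{d}_p(q,s)$ in closed form for an arbitrary destination $s = (x,y)$ with $y > 0$ (placing $q$ at the origin on $\MC{H}$; the part of $W_p(q,t)$ below $\MC{H}$ and the left half follow by symmetry) and then reading off the level set $\hat{d}_p(q,s) = t$. Any candidate shortest path decomposes into two options: (A) the straight $L_p$-geodesic of cost $(x^p+y^p)^{1/p}$; or (B) move along $\MC{H}$ from $q$ to some $(x_0,0)$ and then take the $L_p$-geodesic from $(x_0,0)$ to $s$, for total cost $x_0/v_\MC{H} + ((x-x_0)^p + y^p)^{1/p}$ (by symmetry I restrict to $x \ge 0$ and $0 \le x_0 \le x$). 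Differentiating option (B) in $x_0$ yields the first-order condition
\[
(x-x_0)^{p-1} \;=\; \frac{1}{v_\MC{H}}\bigl((x-x_0)^p + y^p\bigr)^{(p-1)/p},
\]
equivalently $(x-x_0)/y = r^\star$ where $r^\star := v_\MC{H}^{-1/(p-1)}\bigl(1-v_\MC{H}^{-p/(p-1)}\bigr)^{-1/p}$, a constant depending only on $p$ and $v_\MC{H}$. Hence the optimal exit point is $x_0^\ast = x - r^\star y$, which is admissible precisely when $x \ge r^\star y$.

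Based on this, I would partition the upper half-plane into two regimes. In the regime $x < r^\star y$, the unconstrained minimizer demands $x_0^\ast < 0$, so option (B) degenerates to $x_0 = 0$ and coincides with (A); thus $\hat{d}_p(q,s) = (x^p+y^p)^{1/p}$, and the level set $\hat{d}_p(q,s) = t$ is the arc of the $p$-circle $x^p+y^p = t^p$ clipped by the rays $|x| = r^\star y$. Direct substitution shows these rays meet the $p$-circle exactly at $(\pm \hat{x}_t, y_p(\hat{x}_t,t))$, yielding $\OP{Fan}_p(q,t)$. In the regime $x \ge r^\star y$, plugging $x_0^\ast$ back into option (B) yields the \emph{affine} expression
\[
\hat{d}_p(q,s) \;=\; \frac{x - r^\star y}{v_\MC{H}} \;+\; y\,(r^{\star p}+1)^{1/p},
\]
so the level set $\hat{d}_p(q,s) = t$ is a straight line segment in this regime. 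Evaluating at the two natural boundary cases $x_0^\ast = 0$ and $x_0^\ast = v_\MC{H} t$ gives the two endpoints $(\hat{x}_t, y_p(\hat{x}_t,t))$ and $(v_\MC{H} t, 0)$ of $\OP{RT}_p(q,t)$; the reflected analysis on $x \le 0$ produces $\OP{LT}_p(q,t)$.

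Finally, I would verify tangency at the meeting point $(\hat{x}_t, y_p(\hat{x}_t,t))$: implicit differentiation of $x^p + y^p = t^p$ gives a $p$-circle slope of $-\hat{x}_t^{p-1}/y_p(\hat{x}_t,t)^{p-1}$, and the slope of $\OP{RT}_p(q,t)$ is $-y_p(\hat{x}_t,t)/(v_\MC{H} t - \hat{x}_t)$; setting $\alpha := v_\MC{H}^{-1/(p-1)}$, the single algebraic identity $\alpha^{p-1} v_\MC{H} = 1$ reconciles the two expressions. The $v_\MC{H}=\infty$ case is immediate: every point of $\MC{H}$ is reachable from $q$ in zero time, so the reachable set in time $t$ is $\{(x,y) : |y| \le t\}$, whose upper boundary is the horizontal line $y = t$. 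The main obstacle is establishing linearity of the highway-using portion of the wavefront for general $p$; I sidestep the awkward envelope-of-$p$-circles computation by pushing the optimisation through in closed form and observing that $\hat{d}_p(q,s)$ is affine in $(x,y)$ throughout the tangent regime, after which both the line-segment shape and the tangency identity reduce to short computations.
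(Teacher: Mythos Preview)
Your argument is correct and takes a genuinely different route from the paper. The paper proves the lemma via an \emph{envelope argument}: invoking Lemma~\ref{lemma-wavefront-recursion}, it writes the wavefront as the boundary of $\bigcup_{0\le\gamma\le v_\MC{H} t}\MC{C}(\gamma)$, where $\MC{C}(\gamma)$ is the $p$-circle of radius $t-\gamma/v_\MC{H}$ centred at $(\gamma,0)$; parameterising points on each circle by a fixed angle $\theta$, it shows that the $y$-coordinate of the $\theta$-point is linear in $\gamma$, so for each $\theta$ the trace is a line segment ending at $(v_\MC{H} t,0)$, and the boundary of the union is the uppermost such segment together with the residual arc. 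By contrast, you compute the value function $\hat d_p(q,\cdot)$ directly: optimise over the highway exit point $x_0$, obtain the fixed ratio $(x-x_0)/y=r^\star$, and observe that the resulting time-distance is \emph{affine} in $(x,y)$ on the cone $x\ge r^\star y$, which immediately gives straight-line level sets and makes the tangency check a one-line identity $\alpha^{p-1}v_\MC{H}=1$. Your approach is shorter and dovetails naturally with Lemma~\ref{lemma-incidence-angle} (your $r^\star$ is $\tan\alpha$), while the paper's envelope picture is more geometric and makes the ``fan of segments through $(v_\MC{H} t,0)$'' structure visually explicit. Two small points you leave implicit but which are easy to fill in: that the candidate path really has the form ``highway segment then straight line'' (no benefit from leaving $\MC{H}$ and returning, since $v_\MC{H}>1$), and that the critical $x_0^\ast$ is a minimum (the cost is convex in $x_0$ as a sum of a linear term and an $L_p$-norm).
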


\smallskip

For each $1\le p < \infty$ and $1<v_\MC{H}\le \infty$, we define the real number $\alpha\BIGP{p,v_\MC{H}}$ as follows.
If $p = 1$ or $v_\MC{H} = \infty$, then $\alpha\BIGP{p,v_\MC{H}}$ is defined to be zero. Otherwise, $\alpha\BIGP{p,v_\MC{H}}$ is defined to be 
%\vspace{-8pt}
$$\arcsin \frac{v_\MC{H}^{1/(1-p)}}{\sqrt{v_\MC{H}^{2/(1-p)}+\BIGP{1-v_\MC{H}^{p/(1-p)}}^{2/p}}}.$$
Note that, when $p = 2$, this is exactly $\arcsin(1/v_\MC{H})$.
%
%\mong{$p=\infty,v_\MC{H}=\infty$}
%Let $\alpha(p,v_\MC{H}) = \begin{cases}0 & \text{if $p = 1$, and} \\ \arcsin v_\MC{H}^{1/(1-p)} & \text{if $p > 1$.}\end{cases}\qquad$
%\vspace{4pt}
For brevity, we simply use $\alpha$ when there is no ambiguity.
The behavior of a shortest time-path that takes the advantage of traversal along the highway is characterized by the following lemma.

\begin{lemma} \label{lemma-incidence-angle}
For any point $q = \BIGP{x_q,y_q}$, $1\le p < \infty$, and $1<v_\MC{H}\le\infty$, if a shortest time-path starting from $q$ uses the highway $\MC{H}$, then it must enter the highway with an incidence angle $\alpha$.
\end{lemma}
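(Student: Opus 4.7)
The plan is to argue by a Fermat-principle style optimization, localizing the problem to the choice of the highway entry point. By Lemma~\ref{lemma-shortest-time-path-bisecting-set} together with Lemma~\ref{lemma-wavefront-recursion}, any shortest time-path from $q$ that uses the highway decomposes into an off-highway straight segment from $q$ to an entry point $q_1 \in \MC{H}$, a traversal along $\MC{H}$ from $q_1$ to some exit point $q_2 \in \MC{H}$, and further motion thereafter. The incidence angle is determined by $y_q$ and by the horizontal offset between $q_1$ and $q$; in particular, once the routing downstream of $q_1$ is fixed, finding the optimal incidence reduces to optimizing the x-coordinate $x_1$ of $q_1$.

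Assuming without loss of generality that $y_q > 0$ and that the highway traversal proceeds to the right of $q_1$, I would write the time of the first two phases as $(|x_1 - x_q|^p + y_q^p)^{1/p} + (x_2 - x_1)/v_\MC{H}$ plus terms independent of $x_1$. Differentiating with respect to $x_1$ and setting the derivative to zero yields $(x_1 - x_q)^{p-1}/\BIGP{(x_1 - x_q)^p + y_q^p}^{(p-1)/p} = 1/v_\MC{H}$, with the sign forcing $x_1 > x_q$; raising both sides to the power $p/(p-1)$ gives $(x_1 - x_q)^p = y_q^p \cdot v_\MC{H}^{p/(1-p)}/\BIGP{1 - v_\MC{H}^{p/(1-p)}}$. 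Since the incidence angle $\alpha$ with the highway normal satisfies $\tan\alpha = (x_1 - x_q)/y_q$, a routine rearrangement produces exactly $\sin\alpha = v_\MC{H}^{1/(1-p)}/\sqrt{v_\MC{H}^{2/(1-p)} + (1 - v_\MC{H}^{p/(1-p)})^{2/p}}$, as claimed. As a cross-check that uses the wavefront machinery directly, Lemma~\ref{lemma-wavefront-from-highway} identifies the tangent point $(\hat{x}_t, y_p(\hat{x}_t, t))$ of the Fan with the tangent segment; the vector from a highway-based source to this tangent point is proportional to $\BIGP{v_\MC{H}^{1/(1-p)}, (1 - v_\MC{H}^{p/(1-p)})^{1/p}}$ and therefore makes angle $\alpha$ with the normal, independently of $t$ by translation invariance.

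The main obstacle I anticipate is confirming that the interior critical point is actually a global minimum and that the assumed path topology (a single off-highway segment to the entry, then a highway traversal) is in fact optimal. For $1 < p < \infty$ and $1 < v_\MC{H} < \infty$, strict convexity of $u \mapsto (|u|^p + y_q^p)^{1/p}$ gives uniqueness of the interior optimum, so the Snell-type computation above is unambiguous. For the degenerate settings $p = 1$ or $v_\MC{H} = \infty$ the derivative either saturates at $1$ or the highway term vanishes, and the optimum collapses to $x_1 = x_q$, matching the definition $\alpha(p, v_\MC{H}) = 0$ supplied in the excerpt. Finally, the possibility that the shortest time-path uses the highway in multiple disjoint traversals, or that entry and exit coincide, can be folded into a single canonical traversal by a standard exchange argument combining the triangle inequality along the highway with convexity of the $L_p$-balls, after which the single-traversal analysis applies verbatim.
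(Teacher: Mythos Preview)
Your proposal is correct but takes a genuinely different route from the paper. The paper argues purely through the wavefront machinery: it first treats the special case where one endpoint $q_0$ lies on $\MC{H}$ and the other does not, and analyzes where the point of $\OP{Bisect}(q_0,q)$ can lie---in $\OP{Fan}_p$, on $\OP{RT}_p\setminus\MC{H}$, or on $\MC{H}$ itself---showing in each case (via tangency of the $p$-circle with the tangent segment) that the path either goes straight or hits $\MC{H}$ at angle $\alpha$. It then bootstraps to arbitrary pairs by a second bisecting-set argument. Your approach instead fixes the entry point $q_1$ as the only free variable and runs a direct Fermat/Snell first-order computation; the calculus is correct and recovers the closed form for $\alpha$ immediately, with the degenerate cases $p=1$ and $v_\MC{H}=\infty$ handled by the boundary behavior of the derivative. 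Your method is more elementary and self-contained; the paper's method keeps everything inside the wavefront picture, which has the advantage that the tangency interpretation (the entry direction is exactly the direction from the highway source to the Fan/tangent junction of Lemma~\ref{lemma-wavefront-from-highway}) falls out of the case analysis rather than being a separate cross-check.

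One place where your write-up is looser than the paper: you assert up front that a shortest time-path using $\MC{H}$ decomposes as a straight off-highway segment to an entry point, then a highway traversal, then further motion, citing Lemmas~\ref{lemma-shortest-time-path-bisecting-set} and~\ref{lemma-wavefront-recursion}. Those lemmas do not give this decomposition directly; in the paper this structure is \emph{derived} from the bisecting-set case analysis rather than assumed. Your variational argument still goes through once one knows the off-highway portion must be an $L_p$ straight segment (which for $1<p<\infty$ follows from strict convexity of the $p$-ball, and for $p=1$ the length formula is the same regardless), but you should state that justification explicitly rather than attributing it to the cited lemmas.
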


%\begin{proof}[Proof of Lemma~\ref{lemma-incidence-angle}]
%\end{proof}

%%
%%

\paragraph{Walking Regions.}

%Lemma~\ref{lemma-incidence-angle} characterizes the behavior of a shortest time-path between any two points. 
%
%
For any point $q = (x_q,y_q) \in \MBB{R}^2$ with $y_q \ge 0$, let $q^+_\MC{H}$ and $q^-_\MC{H}$ be two points 
%with coordinates
located at $\BIGP{x_q \pm y_q\tan\alpha,0}$, respectively. 
%Literally, 
By Lemma~\ref{lemma-incidence-angle}, we know that, $q^+_\MC{H}$ and $q^-_\MC{H}$ are exactly the points at which any shortest time-path from $q$ will enter the highway if needed.
This gives the walking region for any point.
{Let $\alpha = {\pi}/{4}$ when $p = \infty$.} 
The following lemma is an updated version of Proposition~\ref{prop-l2-wr} for general $p$ with $1\le p\le \infty$.

\begin{lemma} \label{lemma-wr-general}
For any $p$ with $1\le p\le \infty$ and any point $q = (x_q,y_q)$ with $y_q \ge 0$, 
%The walking region of a point $q \in \MBB{R}^2$ 
$\OP{WR}_p(q)$ is characterized by the following two curves:
(a) \emph{right discriminating curve}, which is the curve $q^\prime = (x^\prime,y^\prime)$ satisfying $x^\prime\ge x_q + y_q\tan\alpha$ and
%\vspace{-20pt}
\begin{align*}
\BIGC{\overline{qq^\prime}}_p = \BIGC{\overline{qq^+_\MC{H}}}_p + \BIGC{\overline{q^\prime q^{\prime-}_\MC{H}}}_p + \frac{1}{v_\MC{H}}\BIGC{\overline{q^+_\MC{H}q^{\prime-}_\MC{H}}}_p.
\end{align*}
(b) The \emph{left discriminating curve} is symmetric with respect to the line $x=x_q$. 
\end{lemma}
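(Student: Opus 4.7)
The plan is to combine Lemma~\ref{lemma-incidence-angle} with a direct comparison between the direct-walk time and the cheapest highway-using time from $q$ to a target $q^\prime$, and to read off the discriminating curves as the loci where the two times coincide.

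First I would invoke Lemma~\ref{lemma-incidence-angle}: any shortest time-path leaving $q$ that uses $\MC{H}$ must enter it at $q^+_\MC{H}$ or $q^-_\MC{H}$, since these are the only two points on $\MC{H}$ for which the segment from $q$ meets $\MC{H}$ at the prescribed incidence angle $\alpha$. Applying the same lemma with the roles of source and destination swapped forces the exit point on $\MC{H}$ to lie in $\BIGBP{q^{\prime+}_\MC{H},\, q^{\prime-}_\MC{H}}$. This reduces the set of candidate highway paths from $q$ to $q^\prime$ to the four polylines $q \to q^\pm_\MC{H} \to q^{\prime\pm}_\MC{H} \to q^\prime$.

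Next I would restrict attention to $q^\prime=(x^\prime,y^\prime)$ with $x^\prime \ge x_q + y_q\tan\alpha$; the other case follows by reflection across the line $x=x_q$. Among the four candidate polylines the one using entry $q^+_\MC{H}$ and exit $q^{\prime-}_\MC{H}$ is the cheapest, because any other choice forces a strictly longer walking leg whose cost the highway cannot compensate for at its reduced per-unit rate $1/v_\MC{H}$. Hence the cheapest highway-using time is
\[
T_\MC{H}(q^\prime) \;=\; \BIGC{\overline{qq^+_\MC{H}}}_p + \frac{1}{v_\MC{H}}\BIGC{\overline{q^+_\MC{H}q^{\prime-}_\MC{H}}}_p + \BIGC{\overline{q^\prime q^{\prime-}_\MC{H}}}_p,
\]
to be compared with the direct walking time $T_W(q^\prime) = \BIGC{\overline{qq^\prime}}_p$. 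By the definition of the walking region, $q^\prime \in \OP{WR}_p(q)$ iff $T_W(q^\prime) \le T_\MC{H}(q^\prime)$, and its boundary in this half-plane is precisely the equation displayed for the right discriminating curve. The left discriminating curve then arises by reflection, and the middle strip $x_q - y_q\tan\alpha \le x^\prime \le x_q + y_q\tan\alpha$ is contained in $\OP{WR}_p(q)$, since no incidence-angle-$\alpha$ entry into $\MC{H}$ can be followed by a forward highway traversal reaching such a $q^\prime$.

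The main obstacle I anticipate is to verify that the locus $T_W=T_\MC{H}$ is genuinely a separating curve, i.e.\ that $T_W(q^\prime)-T_\MC{H}(q^\prime)$ changes sign monotonically as $q^\prime$ leaves a neighborhood of $q$ into the region $x^\prime \ge x_q + y_q\tan\alpha$. The clean heuristic is that rightward motion lengthens the direct walk at unit rate while it only lengthens the middle segment of $T_\MC{H}$ at rate $1/v_\MC{H}$; making this rigorous across all $p$ requires care because the $L_p$ norm is non-separable. I plan to handle $1<p<\infty$ by differentiating the explicit $L_p$-distance formula along rays from $q$, and to treat the non-strictly-convex cases $p\in\BIGBP{1,\infty}$ via an elementary pointwise inequality that sidesteps the non-uniqueness of shortest paths. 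The degenerate case $v_\MC{H}=\infty$ is benign since $\alpha=0$ collapses $q^+_\MC{H}=q^-_\MC{H}$ and the highway-segment term vanishes.
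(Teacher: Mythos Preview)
Your proposal is correct and follows essentially the same route as the paper: invoke Lemma~\ref{lemma-incidence-angle} to fix the entry and exit points on $\MC{H}$, then equate the direct $L_p$-distance with the resulting highway-path time to obtain the boundary curve. The paper's proof is in fact terser than yours for $1\le p<\infty$, asserting the derivation is straightforward once Lemma~\ref{lemma-incidence-angle} is in hand; for $p=\infty$ it carries out an explicit wedge-by-wedge analysis (partitioning the plane by the two lines of slope $\pm 1$ through $q$) rather than the pointwise inequality you sketch, and the monotone-separation issue you flag as the main obstacle is not addressed in the paper at all.
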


%

%Hence, the \emph{right discriminative parabola} of the boundary of $\OP{WR}(q)$ is characterized by the following equation:
%\begin{align*}
%q^\prime = (x_{q^\prime},y_{q^\prime}) \in \MBB{R}^2, x_{q^\prime}\ge x_q+y_q\tan\alpha, \enskip \text{such that} \enskip \BIGC{\overline{qq^\prime}}_p = \BIGC{\overline{qq^+_\MC{H}}}_p + \BIGC{\overline{q^\prime q^{\prime-}_\MC{H}}}_p + \frac{1}{v_\MC{H}}\BIGC{\overline{q^+_\MC{H}q^{\prime-}_\MC{H}}}_p.
%\end{align*}
%

For any point $q$, let $\OP{WR}_{\ell}(q)$ and $\OP{WR}_{r}(q)$ denote the left- and right- discriminating curves of $\OP{WR}(q)$, respectively.
%
%For a point $q$ with non-negative $y$-coordinate and any $y_0\ge 0$, let $x_\ell^{\OP{W}}(q,y_0)$ and $x_r^{\OP{W}}(q,y_0)$ denote the $x$-coordinates of the two points on $\OP{WR}_\ell(q)$ and $\OP{WR}_r(q)$ and have $y$-coordinate $y_0$, respectively.
%
%We say that $\OP{WR}_\ell(q_1)$ lies to the left of $\OP{WR}_\ell(q_2)$, or, $\OP{WR}_\ell(q_1) \rhd \OP{WR}_\ell(q_2)$, if for all $y_0 \ge 0$
%
We have the following \emph{dominance property} of the walking regions.

\begin{lemma} \label{lemma-wr-dominance}
%\mong{}
Let $q_1 = (x_1,y_1)$ and $q_2 = (x_2, y_2)$ be two points such that $x_1 \le x_2$. If $y_1 \ge y_2$, then $\OP{WR}_{\ell}(q_2)$ lies to the right of $\OP{WR}_{\ell}(q_1)$.
Similarly, if $y_1 \le y_2$, then $\OP{WR}_{r}(q_1)$ lies to the left of $\OP{WR}_{r}(q_2)$.
\end{lemma}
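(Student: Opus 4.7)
The strategy is to reduce the claim to a point-wise comparison: for any $q^* = (x^*, y^*) \in \OP{WR}_\ell(q_1)$, the same point must lie strictly outside $\OP{WR}(q_2)$, forcing $x^*$ to be strictly less than the $x$-coordinate of $\OP{WR}_\ell(q_2)$ at height $y^*$. Setting $F(q) := \BIGC{\overline{qq^*}}_p - T_\MC{H}(q, q^*)$, where $T_\MC{H}(q, q^*)$ denotes the via-highway time from Lemma~\ref{lemma-wr-general}, we have $F(q_1) = 0$ by hypothesis, and it suffices to prove $F(q_2) > 0$. I would achieve this by examining $F$ along the straight segment $q(s) := (1-s)q_1 + s q_2$ for $s \in [0,1]$ and integrating the directional derivative.

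By Lemma~\ref{lemma-wr-general}, $T_\MC{H}(q, q^*) = \BIGC{\overline{qq^-_\MC{H}}}_p + \BIGC{\overline{q^*q^{*+}_\MC{H}}}_p + \frac{1}{v_\MC{H}}\BIGC{\overline{q^-_\MC{H}q^{*+}_\MC{H}}}_p$. When the middle piece is correctly oriented leftward---a condition that one verifies by a direct linear check holds throughout the segment---$T_\MC{H}$ becomes an affine function of $q$ with a constant gradient depending only on $p$ and $v_\MC{H}$. Meanwhile, the direct distance $\BIGC{\overline{qq^*}}_p$ has gradient $\BIGP{(u/d)^{p-1}, (v/d)^{p-1}}$ with $u := x_q - x^*$, $v := y_q - y^*$, and $d := \BIGC{\overline{qq^*}}_p$. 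Parametrizing by $\theta := \arctan(u/v)$, the angle of $q - q^*$ from the vertical, one verifies by direct calculation that the two components of $\nabla F(q)$ vanish exactly at $\theta = \alpha$ and carry signs $(+, -)$ for $\theta > \alpha$---matching the sign pattern $(+, -)$ of $q_2 - q_1 = (x_2 - x_1, y_2 - y_1)$.

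It remains to certify $\theta \ge \alpha$ along the segment. At $s = 0$: Lemma~\ref{lemma-wr-general}'s constraint $x^* \le x_1 - y_1\tan\alpha$, together with $v_1 = y_1 - y^* \le y_1$ (using $y^* \ge 0$), gives $u_1/v_1 \ge \tan\alpha$, i.e., $\theta \ge \alpha$. As $s$ grows, $u(s)$ is non-decreasing and $v(s)$ non-increasing, so $u/v$, and hence $\theta$, can only grow, strictly so past $s = 0$ whenever $q_1 \ne q_2$. Consequently, $\nabla F(q(s)) \cdot (q_2 - q_1) \ge 0$ along the segment with strict positivity on a subinterval of positive measure, yielding $F(q_2) > F(q_1) = 0$ by integration. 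The main obstacle I anticipate is the degenerate corner case $y^* = 0$ with $q^* = q^-_{1, \MC{H}}$, where $\theta = \alpha$ exactly at $s = 0$ and both components of $\nabla F(q_1)$ vanish; strict monotonicity of $\theta$ for $s > 0$ handles this. The symmetric claim about $\OP{WR}_r$ then follows by reflecting the entire argument across the line $x = x_q$.
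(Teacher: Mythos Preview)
Your approach differs genuinely from the paper's. The paper argues geometrically via the wavefront parametrization of the discriminating curves: it first reduces to two points with the same $x$-coordinate, and then for each point $q'_w \in \OP{WR}_\ell(q')$ it constructs, by running the propagation for the same time $t_c$ from $q$ instead of from $q'$, a companion point $q_w \in \OP{WR}_\ell(q)$ and compares their positions directly, using that one $p$-circle is contained in the other (after a horizontal translation in the finite-speed case). It treats $p=1$, $p=\infty$, and $1<p<\infty$ separately, and in the last case splits again on $v_\MC{H}=\infty$ versus $v_\MC{H}<\infty$. Your scalar-function approach is more uniform and avoids most of that case splitting once the sign pattern of $\nabla F$ is secured.

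There is, however, a real gap in your sign analysis. The parametrization $\theta=\arctan(u/v)$ and the step ``$v_1\le y_1$ gives $u_1/v_1\ge\tan\alpha$'' silently assume $v=y_q-y^*>0$, but the left discriminating curve extends above $q_1$, so $y^*>y_1$ is a genuine case you have not handled. In fact $\partial_{x_q}F=(u/d)^{p-1}-1/v_\MC{H}$ depends only on $u/|v|$, not on the signed ratio, and is nonnegative exactly when $u/|v|\ge\tan\alpha$; in the $[0,\pi]$ angle convention this is $\theta\in(\alpha,\pi-\alpha)$, not all $\theta>\alpha$. The repair is short: the ``correctly oriented'' inequality you already invoke, $u\ge(y_q+y^*)\tan\alpha$, combined with $|y_q-y^*|\le y_q+y^*$, yields $u/|v|\ge\tan\alpha$ directly, and this persists along the segment since $u(s)-(y(s)+y^*)\tan\alpha$ has nonnegative $s$-derivative $(x_2-x_1)-(y_2-y_1)\tan\alpha\ge 0$. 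With that patch the argument goes through for $1<p<\infty$; the endpoints $p=1$ and $p=\infty$ still need a separate word (the $L_p$ distance is not smooth there), which the paper handles by direct inspection of the explicit piecewise-linear discriminating curves.
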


\begin{figure*}[htp]
%\begin{wrapfigure}[10]{r}{0.35\textwidth}
%\vspace{-34pt}
\centering\fbox{\includegraphics[scale=0.85]{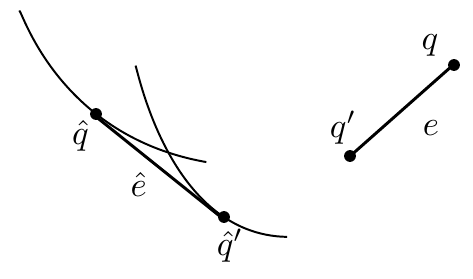}}
%\vspace{-16pt}
\caption{\small The left side shows the left boundary of the walking region for the edge $e=(q^\prime,q)$ on the right side.}
% with a positive slope.}
\label{fig-wr-edge}
%\end{wrapfigure}
\end{figure*}
Lemma~\ref{lemma-wr-dominance} suggests that, to describe the leftmost and the rightmost boundaries of the walking-regions for a set of points, it suffices to consider the \emph{extreme points}.
Let $e = \overline{q_1q_2}$ be a line segment between two points $q_1$ and $q_2$, where $q_1$ lies to the left of $q_2$.
If $e$ has non-positive slope, then the left-boundary of the walking region for $e$ is dominated by $\OP{WR}_\ell(q_1)$.
Otherwise, we have to consider $\bigcup_{q \in e}\OP{WR}_\ell(q)$.
By parameterizing each point of $e$, it is not difficult to see that the left-boundary consists of $\OP{WR}_\ell(q_1)$, $\OP{WR}_\ell(q_2)$, and their common tangent line.
% of them.
%
See also Fig.~\ref{fig-wr-edge} for an illustration.

%

%\begin{lemma} \label{lemma-wr-edge}
%walking-region-edge \mong{}
%Let $\MC{C}$ be a piecewise linear curve that is $x$-monotonic. If $\MC{C}$ is also $y$-monotonic, then $\OP{WR}_\ell\BIGP{\MC{C}}$ consists of a set of line segments and parabolas that belong to the boundaries of the walking-regions for 
%to which 
%the vertexes and the line segments of $\MC{C}$.
% correspond.
%\end{lemma}

%\begin{proof}[Proof of Lemma~\ref{lemma-wr-edge}]
%\end{proof}

%%
%%

\paragraph{Closure and Time-Convex Hull of a Point Set.}

By Lemma~\ref{lemma-shortest-time-path-bisecting-set}, to obtain the union of possible shortest time-paths, it suffices to consider the set of all possible bisecting sets that arise inside the recursion.
We begin with the closure between pairs of points.

\begin{lemma} \label{lemma-closure-pair-of-points}
Let $q_1, q_2 \in \MBB{R}^2$ be two points. When the highway is not used, the set of all shortest time-paths between $q_1$ and $q_2$ is:
%, i.e., the closure of the set $\MB{S}=\BIGBP{q_1,q_2}$ is:
\begin{compactitem}
	\item
		The smallest bounding rectangle of 
		%$\MB{S}$, 
		$\BIGBP{q_1,q_2}$, when $p = 1$.
	
	\item
		The straight line segment $\overline{q_1q_2}$ joining $q_1$ and $q_2$, when $1<p<\infty$.
		
	\item
		The smallest bounding parallelogram whose slopes of the four sides are $\pm 1$, i.e., a rectangle rotated by $45^\circ$, that contains $q_1$ and $q_2$. 
		%$\MB{S}$.
\end{compactitem}
\end{lemma}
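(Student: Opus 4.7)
The plan is to split into three cases according to the strict convexity of the $L_p$ unit ball, which governs uniqueness of geodesics.

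When $1 < p < \infty$, the unit $L_p$-ball is strictly convex, so the triangle inequality $d_p(a,b) + d_p(b,c) \ge d_p(a,c)$ is strict whenever $b$ does not lie on the segment $\overline{ac}$. Summing this inequality along any partition of a curve $\MC{C}$ joining $q_1$ and $q_2$ and taking the supremum shows that the length of $\MC{C}$ strictly exceeds $d_p(q_1,q_2)$ unless every intermediate point lies on $\overline{q_1q_2}$, which forces $\MC{C}$ to coincide with the segment. Hence $\overline{q_1q_2}$ is the unique shortest time-path.

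For $p = 1$, let $R$ be the smallest bounding rectangle of $\{q_1,q_2\}$. First I would show that any curve $\MC{C}$ from $q_1=(x_1,y_1)$ to $q_2=(x_2,y_2)$ whose $x$- and $y$-coordinates are both monotone has $L_1$-length exactly $|x_2-x_1| + |y_2-y_1| = d_1(q_1,q_2)$; by monotonicity, the horizontal and vertical contributions $\int|dx|$ and $\int|dy|$ collapse to the respective coordinate differences. Conversely, any curve exhibiting backtracking in either coordinate has strictly larger $L_1$-length, since such backtracking is counted twice in the total variation. Hence the shortest time-paths are precisely the axis-monotone curves from $q_1$ to $q_2$, all of which are contained in $R$. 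To verify that every point of $R$ is covered, observe that each $r \in R$ lies on the axis-monotone polyline $q_1 \to r \to q_2$, which is itself a shortest time-path by the above.

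For $p = \infty$, I would apply the linear change of coordinates $(x,y) \mapsto (u,v) = (x+y,\, x-y)$, under which a direct calculation gives $d_\infty(q_1,q_2) = \tfrac{1}{2}\BIGP{|u_1-u_2|+|v_1-v_2|}$, and consequently the $L_\infty$-length of any curve equals $\tfrac{1}{2}$ times its $L_1$-length in the new coordinates. Under this map the bounding parallelogram with sides of slopes $\pm 1$ in $(x,y)$ becomes the axis-aligned bounding rectangle in $(u,v)$, so the $p = \infty$ statement reduces to the $p = 1$ statement already established. The main technical delicacy I anticipate is making the monotonicity characterization for $p=1$ rigorous for arbitrary rectifiable curves (rather than only piecewise-smooth ones), which requires working with the total-variation definition of curve length; once that is in place, the strict-convexity case and the linear reduction to $L_1$ follow routinely.
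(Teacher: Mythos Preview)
Your argument is correct. It differs from the paper's route in two of the three cases. For $1<p<\infty$ the paper stays inside its own wavefront framework: it argues that when the two $p$-circles $\MC{C}^\circ_p(q_1,t)$ and $\MC{C}^\circ_p(q_2,t)$ first meet they share a common tangent line, so the bisecting point is collinear with $q_1$ and $q_2$, and then iterates via Lemma~\ref{lemma-shortest-time-path-bisecting-set}. You instead invoke the standard Minkowski-space fact that strict convexity of the $L_p$ unit ball forces equality in the triangle inequality only for collinear points, which is cleaner and independent of the paper's machinery. For $p=1$ the paper exhibits the two $L$-shaped staircase paths, fills in the rectangle by taking shortest paths between intermediate horizontal pairs, and then checks that $\OP{Bisect}(q,q')$ stays inside the box; your direct characterization of $L_1$-geodesics as coordinate-monotone curves is more explicit about why \emph{every} shortest path between $q_1$ and $q_2$ (not just the closure of $\{q_1,q_2\}$) lies in the rectangle and why every point of the rectangle is hit. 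For $p=\infty$ both proofs make the same $45^\circ$ reduction to the $L_1$ case. Your version is more self-contained; the paper's version has the virtue of exercising the bisecting-set apparatus it has just built.
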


Lemma~\ref{lemma-closure-pair-of-points} suggests that when the highway is not used and when $1<p<\infty$, the closure, or, convex hull, of a point set $\MB{S}$ with respect to the $L_p$-metric is identical to that in $L_2$, while in $L_1$ and $L_\infty$ the convex hulls are given by the bounding rectangles and bounding square-parallelograms.
%
%The following lemma creates the relation between traditional convex hulls and time-convex hulls.

%

\smallskip

\begin{wrapfigure}[9]{l}{0.52\textwidth}
%\vspace{-18pt}
\fbox{\includegraphics[scale=0.7]{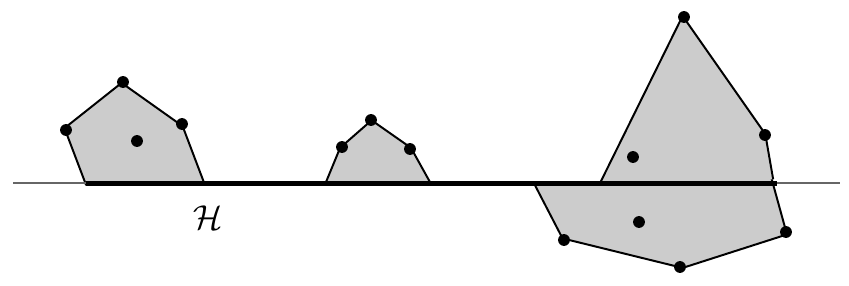}}
%\vspace{-14pt}
\caption{\small Time-convex hull for a set of points under the $L_p$ metric where $1<p<\infty$.}
\label{fig-time-convex-hull-l2}
\end{wrapfigure}
\noindent
When the highway may be used, the structure of the time-convex hull under the general $L_p$-metric 
%does not deviate much from that under $L_2$, i.e., it is composed 
consists of a set of clusters arranged in a way such that the following holds:
% such that 
(1) Any shortest time-path between intra-cluster pair of points must 
%pass through
use the highway.
(2) If any shortest time-path between two points does not use the highway, then the two points must belong to the same cluster.
%
%\mong{include a lemma to indicate how the clusters are divided.}
%
Fig.~\ref{fig-time-convex-hull-l2} and Fig.~\ref{fig-time-convex-hull-linfty} illustrate examples of the time-convex hull for the $L_p$ metrics with $1<p<\infty$ and $p=\infty$, respectively.
Note that, the shape of the closure for each cluster does depend on $p$ and $v_\MC{H}$, as they determine the incidence angle $\alpha$.
%
%We refer the reader to the work of Palop~\cite{Palop03} for further reference. For the rest of this paper, we will focus on the algorithmic results.

%

%%
\begin{figure*}[t]
%\fbox
%\vspace{-12pt}
\hspace{-10pt}
{\includegraphics[scale=0.7]{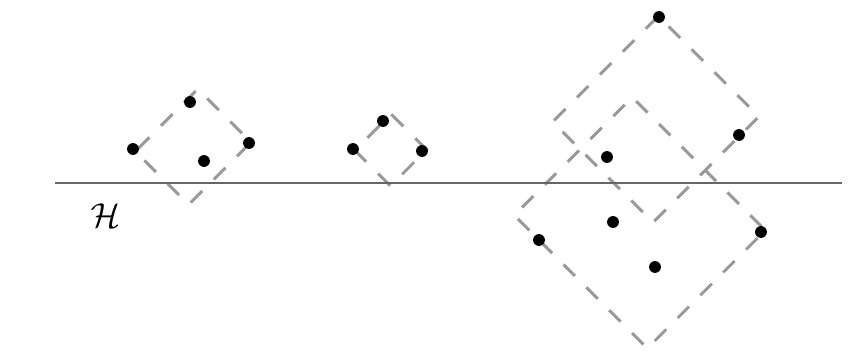}}
{\includegraphics[scale=0.7]{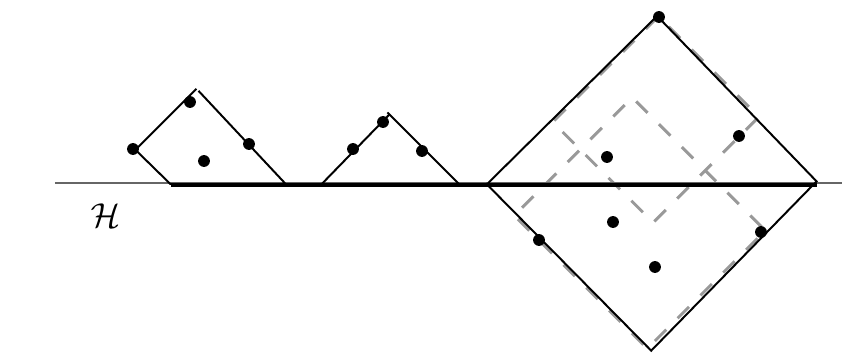}}
%\vspace{-12pt}
\caption{(a) The closure of each cluster under the $L_\infty$ metric when the highway $\MC{H}$ is not considered. (b) The closure,
% of each cluster, 
i.e., the time-convex hull, for the $L_\infty$ metric.
% \mong{reference me somewhere}
}
\label{fig-time-convex-hull-linfty}
\end{figure*}
%%

%\begin{lemma} \label{lemma-time-convex-hull-structure}
%The time-convex hull for a set of points $\MB{S}$ in $L_p$-metric, $1\le p\le \infty$, is a compact set containing no holes. The boundary contains no reflex angles except on the highway $\MC{H}$. 
%Moreover, reflex angles that are also sharp occur only 
%\end{lemma}

%\begin{proof}[Proof of Lemma~\ref{lemma-time-convex-hull-structure}]
%\end{proof}

%

%Lemma~\ref{lemma-time-convex-hull-structure} implies that, the time-convex hull of a point set $\MB{S}$ is 
%\mong{}

%%
%%

%%
%%

%%%%%%%%%%%%%%%%%%%%%%%%%%%%%%%%%%%%%%%
%%
%%

\section{Constructing the Time-Convex Hull}

In this section, we present our algorithmic results for this problem. First, we show that, although our definition of convexity simplifies the structures of the resulting convex hulls, e.g., in $L_1$ and $L_\infty$, the problem of deciding if any given pair of points belongs to the same cluster already requires $\Omega\BIGP{n\log n}$ time. 
Then we present our optimal $\MC{O}\BIGP{n\log n}$ algorithm.

\subsection{Problem Complexity}

We make a reduction from the \emph{minimum gap problem}, which is a classical problem known to have the problem complexity of $\Theta\BIGP{n\log n}$.
Given $n$ real numbers $a_1, a_2, \ldots, a_n$ and a target gap $\epsilon > 0$, the minimum gap problem is to decide if there exist some $i,j$, $1\le i,j\le n$, such that $\BIGC{a_i-a_j} \le \epsilon$.

\smallskip

For any $y \ge 0$, consider the point $q(y) = (0,y)$.
%, and l
Let $\MC{C}_{q(y)}\colon \BIGP{x,f_{q(y)}(x)}$ denote the right discriminating curve of $\OP{WR}_p\BIGP{q(y)}$.
For any $\epsilon > 0$, let $y_0(p,\epsilon)$ denote the specific real number such that $f_{q\BIGP{y_0(p,\epsilon)}}\BIGP{\epsilon} = y_0(p,\epsilon)$.
Our reduction is done as follows.
Given a real number $p \ge 1$ and an instance $\MC{I}$ of minimum gap, we create a set $\MB{S}$ consisting of $n$ points $q_1, q_2, \ldots, q_n$, where $q_i = \BIGP{a_i,y_0\BIGP{p,\epsilon}}$ for $1\le i\le n$. 
The following lemma shows the correctness of this reduction and establishes the $\Omega\BIGP{n\log n}$ lower bound.

\begin{lemma} \label{lemma-lower-bound}
$y_0\BIGP{p,\epsilon}$ is well-defined for all $p$ with $1\le p\le \infty$ and all $\epsilon > 0$. Furthermore, the answer to the minimum gap problem on $\MC{I}$ is ``yes'' if and only if the number of clusters in the time-convex hull of $\MB{S}$ is less than $n$.
\end{lemma}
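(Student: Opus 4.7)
The plan is to pin down $y_0(p,\epsilon)$ by a direct calculation on the right discriminating curve, and then to verify the reduction through a single threshold identity between the direct and highway route times.

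For well-definedness, I would substitute $q=(0,y)$ and $q'=(\epsilon,y)$ into the defining equation of the right discriminating curve in Lemma~\ref{lemma-wr-general}. Since $q$ and $q'$ share the same height, both endpoint-to-highway distances coincide and equal $y(\tan^p\alpha+1)^{1/p}$ for $1\le p<\infty$ (and $y\max(\tan\alpha,1)$ for $p=\infty$, which evaluates to $y$). The equation collapses to a linear relation in $y$:
\[
\epsilon \;=\; 2y(\tan^p\alpha + 1)^{1/p} + \frac{\epsilon - 2y\tan\alpha}{v_\MC{H}}.
\]
Solving yields a closed form for $y_0(p,\epsilon)$. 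Well-definedness then reduces to checking that the denominator $(\tan^p\alpha+1)^{1/p} - \tan\alpha/v_\MC{H}$ is strictly positive, which follows from $(\tan^p\alpha+1)^{1/p} > \tan\alpha \ge \tan\alpha/v_\MC{H}$ together with $v_\MC{H}>1$; the degenerate cases $p=1$ and $v_\MC{H}=\infty$ (both with $\alpha=0$) and $p=\infty$ (with $\alpha=\pi/4$) are handled by direct substitution.

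The key to the reduction is the following threshold identity, which I would verify by subtraction and substitution of the equation above: for any two points at common height $y_0$ separated by horizontal distance $\epsilon'$, the highway-route time minus the direct time equals $(\epsilon - \epsilon')(1 - 1/v_\MC{H})$. Combined with the observation that the minimum travel time between two same-height points without entering $\MC{H}$ equals the $L_p$-distance $\epsilon'$ (since the $L_p$-length of any such curve is bounded below by its horizontal projection), I conclude: when $|a_i-a_j|\le\epsilon$, the straight horizontal segment joining $q_i$ and $q_j$ is a shortest time-path avoiding $\MC{H}$, placing $q_i,q_j$ in the same cluster by the characterization preceding Fig.~\ref{fig-time-convex-hull-l2}; when $|a_i-a_j|>\epsilon$, every non-highway curve has travel time at least $|a_i-a_j|$, which strictly exceeds the highway-route time, so every shortest time-path uses $\MC{H}$ and $q_i,q_j$ lie in distinct clusters.

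After sorting the $a_i$'s, two consecutive points lie in the same cluster iff their gap is at most $\epsilon$; the total number of clusters equals one plus the number of consecutive gaps strictly exceeding $\epsilon$. Hence the cluster count is less than $n$ exactly when some consecutive gap, equivalently some pairwise gap, is at most $\epsilon$, which is precisely a ``yes'' answer to the minimum gap instance. The $\Omega(n\log n)$ lower bound then transfers from minimum gap under the algebraic computation model~\cite{Ben-Or:1983:LBA:800061.808735}. The step I expect to require the most care is the case $|a_i-a_j|>\epsilon$ under $L_1$ and $L_\infty$, where $\OP{STP}(q_i,q_j)$ contains uncountably many curves: the argument must leverage the adopted ``every shortest path'' convention together with the coordinate-projection lower bound on $L_p$-length to ensure that strict suboptimality of the direct time genuinely excludes every non-highway curve from the set of shortest time-paths, not merely the straight segment.
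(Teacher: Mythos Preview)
Your proof is correct in substance and takes a more constructive route than the paper's. The paper fixes $v_\MC{H}=\infty$ and argues well-definedness of $y_0(p,\epsilon)$ by noting that $f_{q(y)}^{-1}$ is continuous and spans $(0,\infty)$ as $y$ ranges over $(0,\infty)$; you instead keep $v_\MC{H}$ arbitrary and solve the discriminating-curve equation explicitly, obtaining a linear relation in $y$ and the pleasant threshold identity $(\epsilon-\epsilon')(1-1/v_\MC{H})$ for the highway-minus-direct time. Your approach is more informative and covers all highway speeds at once; the paper's is shorter but non-constructive. For the reduction itself the paper offers only one sentence (``testing whether a point belongs to the walking region of another is equivalent to testing the gap-length''), so your second and third paragraphs are considerably more detailed than what the paper supplies.

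One statement to tighten: the assertion that $|a_i-a_j|>\epsilon$ forces $q_i,q_j$ into distinct clusters is false for arbitrary pairs, since intermediate points can bridge them (take $a_i=0$, $a_k=\epsilon/2$, $a_j=5\epsilon/4$: then $q_i,q_k$ and $q_k,q_j$ each share a cluster, hence so do $q_i,q_j$, yet $|a_i-a_j|>\epsilon$). What your final paragraph actually needs is the weaker claim that when \emph{all} pairwise gaps exceed $\epsilon$, every $q_i$ is a singleton cluster; this in turn requires knowing that the closure of each singleton---the two incidence-angle segments from $q_i$ down to $\MC{H}$---does not reach the walking region of any other singleton's closure. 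That step follows from Lemma~\ref{lemma-wr-dominance} applied along those segments, but neither you nor the paper spells it out.
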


\begin{corollary}
Given a set of points $\MB{S}$ in the plane, a real number $p$ with $1\le p\le \infty$, and a highway $\MC{H}$ placed on the x-axis, the problem of deciding if any given pair of points belongs to the same cluster requires $\Omega\BIGP{n\log n}$ time.
\end{corollary}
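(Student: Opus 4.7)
The plan is to derive the corollary as an immediate consequence of Lemma~\ref{lemma-lower-bound} together with the classical $\Omega\BIGP{n\log n}$ lower bound for the minimum gap problem in the algebraic computation tree model~\cite{Ben-Or:1983:LBA:800061.808735}. The lemma has already exhibited a mapping from an arbitrary minimum-gap instance $\BIGP{a_1,\ldots,a_n;\epsilon}$ to a point set $\MB{S} = \BIGBP{q_1,\ldots,q_n}$ with $q_i = \BIGP{a_i, y_0(p,\epsilon)}$, and it establishes that the minimum-gap answer is affirmative if and only if the time-convex hull of $\MB{S}$ contains fewer than $n$ clusters, i.e., if and only if some pair of input points shares a cluster. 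This is precisely the equivalence needed to transfer the lower bound, so essentially no new geometric content is required.

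I would carry out the argument in two short steps. First, verify that the reduction is cheap: once $y_0(p,\epsilon)$ is known, producing the $n$ points $q_i$ takes $\MC{O}\BIGP{n}$ arithmetic operations. Since $p$ and $\epsilon$ are fixed parameters of the problem and do not scale with $n$, the value $y_0(p,\epsilon)$ is a constant in the algebraic computation model and may be prepared as part of the input description. Second, argue by contrapositive: if some algorithm $A$ decided in $o\BIGP{n\log n}$ time whether any pair of points in an $n$-point input belongs to the same cluster, then composing $A$ with the linear-time reduction would solve an arbitrary minimum-gap instance in $o\BIGP{n\log n}$ time, contradicting Ben-Or's lower bound. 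Therefore the claimed $\Omega\BIGP{n\log n}$ lower bound holds.

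The only fine point worth flagging is the status of $y_0(p,\epsilon)$ in the model of computation. The lemma guarantees that it is well-defined for every $p \in [1,\infty]$ and every $\epsilon > 0$, and since its value depends only on the fixed parameters $p$ and $\epsilon$ rather than on the input list, no computational cost needs to be charged against the reduction. Given this, the corollary follows with no further combinatorial work, and the hardness of deciding the cluster relation matches, up to constants, the running time of the algorithm we will subsequently present.
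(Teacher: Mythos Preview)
Your proposal is correct and matches the paper's approach exactly: the corollary is stated immediately after Lemma~\ref{lemma-lower-bound} without a separate proof, since it follows directly from the reduction established there together with the $\Omega(n\log n)$ lower bound for the minimum-gap problem. Your write-up makes explicit the linear cost of the reduction and the contrapositive argument, which the paper leaves implicit, but no additional ideas are needed.
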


\subsection{An Optimal Algorithm}

In this section, we present our algorithm for constructing the time-convex hull for a given point set $\MB{S}$ under a given metric $L_p$ with $p\ge 1$.
The main approach is to insert the points incrementally into the partially-constructed clusters in ascending order of their x-coordinates.
%, which is similar to that adopted before.
%
In order to prevent a situation that leads to an undesirably complicated query encountered in the previous work by Aloupis et al.~\cite{Aloupis:2010:HHR:1598101.1598676}, we exploit the symmetric property of cluster-merging conditions and reduce the sub-problem to the following geometric query.
\begin{definition}[One-Sided Segment Sweeping Query]%[Point Existence Query]
Given a set of points $\MB{S}$ in the plane, for any line segment $\OP{L}$ of finite slope, the \emph{one-sided segment sweeping query}, denoted $\MC{Q}(\OP{L})$, asks if $\MB{S} \cap \OP{L}^+$ is empty, where $\OP{L}^+$ is the intersection of the half-plane above $\OP{L}$ and the vertical strip defined by the end-points of $\OP{L}$.
That is,
%In other words,
% i.e., 
we ask if there exists any point $p \in \MB{S}$ such that $p$ lies above 
%the segment 
$\OP{L}$.
\end{definition}
In the following, we first describe the algorithm and our idea in more detail, assuming the one-sided segment sweeping query is available.
Then we show how this query can be answered efficiently.

\smallskip

%

%Given a set of points $\MB{S}$, 
%We first partition t
The given set $\MB{S}$ of points is partitioned into two subsets, one containing those points lying above $\MC{H}$ and the other containing the remaining. We compute the time-convex hull for the two subsets separately, followed by using a linear scan on the clusters created on both sides to obtain the closure for the entire point set. Below we describe how the time-convex hull for each of the two subsets can be computed.

Let $q_1, q_2, \ldots, q_n$ be the set of points sorted in ascending order
%according to
of their $x$-coordinates with ties broken by their $y$-coordinates.
During the execution of the algorithm, we maintain the set of clusters the algorithm has created so far, which we further denote by $\MC{C} = \BIGBP{C_1, C_2, \ldots, C_k}$.
For ease of presentation, we denote the left- and right-boundary of the walking region of $C_i$ by $\OP{WR}_\ell(C_i)$ and $\OP{WR}_r(C_i)$, respectively.
Furthermore, we use $q \in \OP{WR}_\ell(C_i)$ or $q \in \OP{WR}_r(C_i)$ to indicate that point $q$ lies to the right of $\OP{WR}_\ell(C_i)$ or to the left of $\OP{WR}_r(C_i)$, respectively.

\smallskip

In iteration $i$, $1\le i\le n$, the algorithm inserts $q_i$ into $\MC{C}$ and checks if a new cluster has to be created or if existing clusters have to be merged.
This is done in the following two steps.
\begin{compactenum}[(a)]
	\item
		\emph{Point inclusion test}.
		In this step, we check if there exists any $j$, $1\le j\le k$, such that $q_i \in \OP{WR}_r(C_j)$. If not, then a new cluster $C_{k+1}$ consisting of the point $q_i$ is created and we enter the next iteration.
		Otherwise, the smallest index $j$ such that $q_i \in \OP{WR}_r(C_j)$ is located.
		The clusters $C_j, C_{j+1}, \ldots, C_k$ and the point $q_i$ are merged into 
		%a single
		one cluster, which will in turn replace $C_j, C_{j+1}, \ldots, C_k$.
		%be the new $C_j$. 
		Let $\MB{E}$ be the set of newly created edges on the upper-hull of this cluster whose slopes are positive. 
		Then we proceed to step (b).
		
	\item
		\emph{Edge inclusion test}.
		Let $k$ be the number of clusters, and $x_0$ be the $x$-coordinate of the leftmost point in $C_k$.
		Pick an arbitrary edge $e \in \MB{E}$, let $\hat{e}$ denote the line segment appeared on $\OP{WR}_\ell(e)$ to which $e$ corresponds, and let $\hat{e}(x_0)$ be the intersection of $\hat{e}$ with the half-plane $x\le x_0$.
		Then we invoke the one-sided segment sweeping query $\MC{Q}\BIGP{\hat{e}(x_0)}$.
		%,  where $\hat{e}$ is the line segment appeared on $\OP{WR}_\ell(e)$ to which $e$ corresponds.
		%
		If no point lies above $\hat{e}(x_0)$, then $e$ is removed from $\MB{E}$.
		Otherwise, $C_k$ is merged with $C_{k-1}$. Let $e^\prime$ be the newly created bridge edge between $C_k$ and $C_{k-1}$. If $e^\prime$ has positive slope, then it is added to the set $\MB{E}$.
		This procedure is repeated until the set $\MB{E}$ becomes empty.
		
		%let $\hat{e}_j$ denote the corresponding line segment appeared on 
\end{compactenum}

\medskip

%

%\paragraph{Point Inclusion Test.}

An approach has been proposed to resolve the \emph{point inclusion test}
%this step 
efficiently, e.g., Yu and Lee~\cite{Yu:2007:TCH:1270397.1271493}, and Aloupis et al.~\cite{Aloupis:2010:HHR:1598101.1598676}.
%, and Aloupis et al.~\cite{Aloupis:2010:HHR:1598101.1598676}. 
%Below we sketch how it efficiently locates the leftmost cluster that has to be merged with the incoming point $q_i$.
%
Below we state the lemma directly and leave the technical details to the appendix for further reference.

\begin{lemma}[\cite{Yu:2007:TCH:1270397.1271493,Aloupis:2010:HHR:1598101.1598676}] \label{lemma-pt-inclusion}
For each iteration, say $i$, the smallest index $j$, $1\le j\le k$, such that $q_i \in \OP{WR}_r(C_j)$ can be located in amortized constant time.
%\mong{}
%left-most cluster is located; total linear time; amortized constant time per point
\end{lemma}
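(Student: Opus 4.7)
The plan is to locate the smallest index $j$ by a right-to-left sweep on the current cluster list $C_1,\ldots,C_k$ and to charge the work to cluster merges, so that the total cost over all $n$ iterations is $O(n)$. Two structural ingredients make this amortization possible.

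First, I would establish a monotonicity property: for the current cluster sequence and the new point $q_i$ (whose $x$-coordinate exceeds that of every previously inserted point), the set of indices $\{j : q_i\in\OP{WR}_r(C_j)\}$ is a contiguous suffix $\{j^*,\ldots,k\}$ of $\{1,\ldots,k\}$. This follows from Lemma~\ref{lemma-wr-dominance} together with an inductive invariant (preserved by the merge step (a)) on the geometric arrangement of the clusters' extreme points, so that the right walking-region boundaries $\OP{WR}_r(C_1),\ldots,\OP{WR}_r(C_k)$ sit in a left-to-right order consistent with cluster indices. Given this suffix structure, the smallest $j$ in the suffix can be identified by a simple sweep rather than by a binary search.

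Second, I would locate $j^*$ by scanning $j=k,k-1,\ldots$ starting at the rightmost cluster and halting as soon as $q_i\notin\OP{WR}_r(C_j)$ (returning $j^*=j+1$, or $j^*=1$ if the scan reaches the left end with all answers affirmative). At each step a single inclusion test is performed. Every affirmative test can be charged to the corresponding cluster $C_j$, which step (a) will immediately merge into the new rightmost cluster and thereby destroy. Since a cluster is destroyed at most once and the total number of clusters ever created is at most $n$, the affirmative tests cost $O(n)$ in aggregate. Each iteration performs at most one negative test (the terminator of the scan), contributing a further $O(n)$ across all iterations. Combined with $O(1)$ bookkeeping per iteration, this yields $O(n)$ total and hence $O(1)$ amortized per iteration.

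The main obstacle is that a single inclusion test $q_i\in\OP{WR}_r(C_j)$ must itself run in $O(1)$ amortized time, even though $\OP{WR}_r(C_j)$ is an outer envelope of multiple right discriminating curves (one per vertex of the upper-right hull of $C_j$, by Lemma~\ref{lemma-wr-dominance}). I would handle this exactly as in~\cite{Yu:2007:TCH:1270397.1271493,Aloupis:2010:HHR:1598101.1598676}: attach to each cluster a pointer into the arc of its upper-right envelope currently relevant for the query, and ensure this pointer advances only monotonically during the lifetime of the cluster, so that the total pointer motion is bounded by the aggregate envelope complexity $O(n)$. The technically delicate part is keeping this pointer invariant consistent through the merging of consecutive clusters in step (a), so that two cluster envelopes combined still expose an active arc to their successor; the detailed bookkeeping mirrors that of the cited works and is deferred to the appendix.
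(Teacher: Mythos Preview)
Your argument hinges on the claim that $\{j : q_i\in\OP{WR}_r(C_j)\}$ is a contiguous suffix of $\{1,\ldots,k\}$, but this is false, and Lemma~\ref{lemma-wr-dominance} cannot rescue it: that lemma says $\OP{WR}_r(q_1)$ lies left of $\OP{WR}_r(q_2)$ only when $x_1\le x_2$ \emph{and} $y_1\le y_2$. When an earlier cluster contains a point taller than every point of a later cluster, the two right boundaries can cross (the paper itself notes that distinct $\OP{WR}_r(C_j)$ intersect at most once), and above the crossing the earlier cluster's boundary sits further to the right. A concrete instance in $L_1$ with $v_{\MC{H}}=2$ (so $\beta=\tfrac14$): take $C_1=\{(0,10)\}$, $C_2=\{(20,1)\}$, and $q_i=(25,7)$. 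Then $C_1,C_2$ are separate, yet at height $7$ the right boundary of $\OP{WR}(C_1)$ is at $x=28$ while that of $\OP{WR}(C_2)$ is at $x=24$; hence $q_i\in\OP{WR}_r(C_1)$ but $q_i\notin\OP{WR}_r(C_2)$. Your right-to-left scan tests $C_2$, gets a negative answer, halts, and misses $j^*=1$ entirely.

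The approach actually taken in~\cite{Yu:2007:TCH:1270397.1271493,Aloupis:2010:HHR:1598101.1598676} (and sketched in the paper's appendix) does not test clusters one by one. It maintains the rightmost envelope $\OP{WR}_r(\MC{C})$ of \emph{all} the right boundaries as a linked list, together with an ordered collection of the envelope pieces that were displaced by earlier insertions. To locate the smallest $j$ one walks these layered lists, discarding pieces that have expired in the $x$-monotone sweep and stopping at the first list whose curve passes above $q_i$. The amortization is over envelope pieces---each created once and discarded once---not over clusters. This is precisely what lets the method detect that a left cluster's boundary has overtaken a right cluster's boundary, the situation your per-cluster scan cannot see.
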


To see that our algorithm gives the correct 
%time-convex hull for a given point set, 
clustering, it suffices to argue the following two conditions: (1) Each cluster-merge our algorithm performs is valid. (2) At the end of each iteration, no more clusters have to be merged.

Apparently these conditions hold at the end of the first iteration, when $q_1$ is processed.
For each of the succeeding iterations, say $i$, if no clusters are merged in step (a), then the conditions hold trivially. Otherwise, the validity of the cluster-merging operations 
%done in step (a) 
is guaranteed by Lemma~\ref{lemma-pt-inclusion} and the fact that if any point lies above $\hat{e}$, then it belongs to the walking region of $e$, meaning that the last cluster, $C_k$, has to be merged again. See also Fig.~\ref{fig-segment-query-reduction} for an illustration.

\smallskip

\begin{wrapfigure}[13]{l}{0.6\textwidth}
%\vspace{-24pt}
\centering\fbox{\includegraphics[scale=.9]{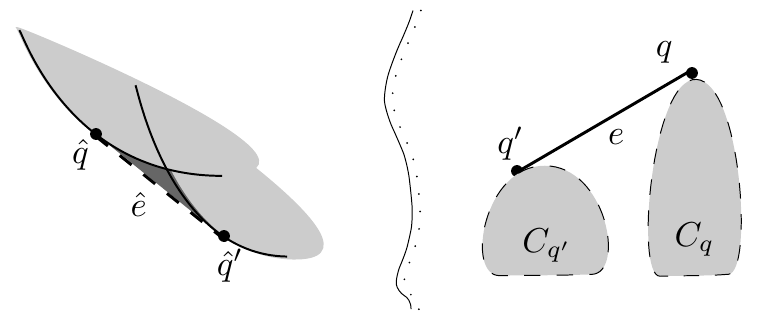}}
%\vspace{-20pt}
\caption{When two clusters $C_{q^\prime}$ and $C_q$ are merged and new hull edge $e$ is created, it suffices to check the new walking region $e$ corresponds to, i.e., the dark-gray area.}
\label{fig-segment-query-reduction}
\end{wrapfigure}
\noindent To see that the second condition holds,
%To be more precise, 
let $e=(q^\prime, q) \in \MB{E}$ be a newly created hull edge, and let $C_{q^\prime}$ and $C_q$ be the two corresponding clusters that were merged.
%
%Also refer to Fig.~\ref{fig-segment-query-reduction} for an illustration.
%
By our assumption that the clusters are correctly created before $q_i$ arrives, we know that the walking-regions of $C_{q^\prime}$ and $C_q$ contain only points that do belong to them, i.e., the light-gray area in the left-hand side of Fig.~\ref{fig-segment-query-reduction} contains only points from $C_{q^\prime}$ or $C_q$.
Therefore, when $C_{q^\prime}$ and $C_q$ are merged and $e$ is created, it suffices to check for the existence of points other than $C_k$ inside the new walking region $e$ corresponds to, which is exactly the dark-gray area in Fig.~\ref{fig-segment-query-reduction}.
Furthermore, by the dominance property stated in Lemma~\ref{lemma-wr-dominance}, it suffices to check those edges with positive slopes.
This shows that at the end of each iteration when $\MB{E}$ becomes empty, no more clusters need to be merged.
We have the following theorem.
%\mong{}
%

\begin{theorem} \label{thm-opt-algo}
Provided that the one-sided segment sweeping query can be answered in $Q(n)$ time using $P(n)$ preprocessing time and $S(n)$ storage, the time-convex hull for a given set $\MB{S}$ of $n$ points under the given $L_p$-metric can be computed in $\MC{O}\BIGP{n\log n + nQ(n) + P(n)}$ time using $\MC{O}\BIGP{n+S(n)}$ space.
\end{theorem}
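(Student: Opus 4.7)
The plan is to give a time-and-space analysis of the incremental algorithm described above, since correctness has essentially already been argued in the paragraphs preceding the theorem statement. The analysis has three pieces: the preprocessing and sorting cost, the cost of all point-inclusion tests, and the total cost of all edge-inclusion tests (the only place where the oracle $\MC{Q}(\cdot)$ is called). I would first note that $\MB{S}$ is split into the two subsets above and below $\MC{H}$ and processed separately, with a final linear scan to glue the two sides; each step of the analysis below then applies to each side independently, and the aggregate bounds still match the stated ones.

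First I would account for the fixed costs. Sorting $\MB{S}$ by $x$-coordinate (breaking ties by $y$) takes $\MC{O}(n\log n)$ time and $\MC{O}(n)$ space. Building the data structure that supports $\MC{Q}(\cdot)$ costs $P(n)$ time and $S(n)$ space once, before the incremental loop starts; the whole point set is known in advance, which is exactly what the preprocessing needs. Maintaining the dynamic sequence of clusters $\MC{C}=\{C_1,\ldots,C_k\}$ together with their upper hulls and the left/right discriminating curves $\OP{WR}_\ell(C_i),\OP{WR}_r(C_i)$ can be done in $\MC{O}(n)$ total space using standard deque/stack structures, since only extreme points matter by Lemma~\ref{lemma-wr-dominance}.

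Next I would bound the two kinds of tests in the main loop. By Lemma~\ref{lemma-pt-inclusion}, the point inclusion test in step (a) contributes only $\MC{O}(1)$ amortized per iteration, hence $\MC{O}(n)$ total. The heart of the analysis is then the edge inclusion test in step (b). I would use an amortized argument based on the total number of edges ever inserted into $\MB{E}$: each such edge is either an upper-hull edge created when clusters are concatenated in step (a) or a bridge edge created when two adjacent clusters are merged in step (b). Using the standard upper-hull construction charge (each point becomes and leaves the hull $\MC{O}(1)$ times) together with the fact that every bridge-merge strictly reduces the number of clusters, the total number of edges that ever enter $\MB{E}$ over the whole execution is $\MC{O}(n)$. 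Every edge in $\MB{E}$ triggers exactly one call to $\MC{Q}(\cdot)$ before it is either discarded or consumed by a merge, so the cumulative query cost is $\MC{O}(n\,Q(n))$, and the bookkeeping around each query (computing $\hat{e}(x_0)$, updating hulls, pushing at most one new edge) is $\MC{O}(1)$ per query by Lemma~\ref{lemma-wr-general} and the convexity of the updated hull.

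The main obstacle, as I see it, is making the edge-count argument airtight: one must verify that the ``upper cluster'' merges in step (a) do not secretly generate $\omega(n)$ new positive-slope hull edges over the course of the algorithm, and that bridge edges created in step (b) cannot be destroyed and recreated. Both follow from the ordered-insertion discipline (points arrive in ascending $x$, so the rightmost cluster $C_k$ is the only one that can grow, and any edge pushed off the upper hull of a cluster never returns). Combining the three bounds yields total time $\MC{O}\BIGP{n\log n + n\,Q(n) + P(n)}$ and total space $\MC{O}\BIGP{n+S(n)}$, proving the theorem.
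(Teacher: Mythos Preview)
Your plan matches the paper's approach almost exactly: the paper's own proof only argues the time bound, relying on the preceding discussion for correctness, and its core counting argument is that at most $\MC{O}(n)$ edges ever enter $\MB{E}$ and at most $\MC{O}(n)$ queries are made.

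There is one small accounting slip to fix. You claim that ``every edge in $\MB{E}$ triggers exactly one call to $\MC{Q}(\cdot)$ before it is either discarded or consumed by a merge.'' But in step~(b), when the query on $e$ returns positively, the algorithm merges $C_k$ with $C_{k-1}$ and possibly adds $e'$, \emph{without removing $e$ from~$\MB{E}$}; on the next pass $e$ can be picked again (now with a smaller $x_0$). So a single edge may be queried several times, and charging queries one-to-one to edges does not directly work. The paper's accounting avoids this: after each query, either one edge is removed from $\MB{E}$ (the ``no'' branch) or the number of clusters decreases by one (the ``yes'' branch); since both quantities are bounded by $\MC{O}(n)$, the total number of queries is $\MC{O}(n)$. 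Replacing your one-to-one charge with this two-potential argument makes your analysis go through unchanged.
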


%%

%%
%%

%%%%%%%%%%%%%%%%%%%%%%%%%%%%%%%%%%%%%%%
%%
%%

\subsubsection{Regarding the One-Sided Segment Sweeping Query.}
% w.r.t. Arbitrary Line Segments}

%

Below we sketch how this query can be answered efficiently in logarithmic time.
Let $\MB{S}$ be the set of points, $\OP{L}$ be the line segment of interest, and $\MC{I}_\OP{L}$ be the vertical strip defined by the two end-points of $\OP{L}$.
We have the following observation, which relates the query $\MC{Q}(\OP{L})$ to the problem of computing the upper-hull of $\MB{S} \cap \MC{I}_\OP{L}$.

\begin{lemma} \label{lemma-segment-query-pruning}
Let $\MC{I}$ be an interval, $\MC{C}\colon \MC{I} \rightarrow \MBB{R}$ be a convex function, i.e., we have $\MC{C}\BIGP{\frac{1}{2}\BIGP{x_1+x_2}} \ge \frac{1}{2}\BIGP{\MC{C}(x_1)+\MC{C}(x_2)} \enskip \forall x_1,x_2 \in \MC{I}$, that is differentiable almost everywhere,
$\OP{L}$ be a segment with slope $\theta_L$, $-\infty < \theta_L < \infty$, and $q=\BIGP{x_q,\MC{C}(x_q)}$ be a point on the curve $\MC{C}$ such that 
%\vspace{-4pt}
$$\lim_{x\rightarrow x_q^-}\frac{\OP{d}\MC{C}(x)}{\OP{d}x} \ge \theta_L \ge \lim_{x\rightarrow x_q^+}\frac{\OP{d}\MC{C}(x)}{\OP{d}x}.$$
If $q$ lies under $\overleftrightarrow{\OP{L}}$, then the curve $\MC{C}$ never intersects $\OP{L}$.
\end{lemma}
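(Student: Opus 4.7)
The plan is to exploit the stated midpoint inequality together with the derivative sandwich at $x_q$ to produce a line through $q$ that traps the whole curve beneath it. Since $\OP{L}$ and this line share the same slope, they are parallel, and once $q$ is shown to lie strictly beneath $\overleftrightarrow{\OP{L}}$ a one-line parallel-translate argument lifts the non-intersection guarantee from the supporting line up to $\overleftrightarrow{\OP{L}}$ itself.

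First, I would invoke the standard fact that a function satisfying the midpoint inequality of the lemma (this is concavity in the usual sense, reflecting the upper-hull geometry the authors have in mind) admits at every interior point $x_q$ one-sided derivatives $\MC{C}'_-(x_q)$ and $\MC{C}'_+(x_q)$ with $\MC{C}'_-(x_q)\ge \MC{C}'_+(x_q)$; moreover, for every slope $\theta$ in the closed interval $\BIGCP{\MC{C}'_+(x_q),\MC{C}'_-(x_q)}$ the line $y=\MC{C}(x_q)+\theta\BIGP{x-x_q}$ is a global upper supporting line of the graph of $\MC{C}$ throughout $\MC{I}$. This is the classical subgradient characterisation for concave functions; it is obtained by iterating the midpoint inequality on dyadic rationals and passing to limits, which is legitimate since $\MC{C}$ is differentiable almost everywhere and hence continuous. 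Applied with $\theta=\theta_L$, which by hypothesis lies in that interval, this produces an upper supporting line $\ell$ through $q$ of slope exactly $\theta_L$.

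Next I would compare $\ell$ with $\overleftrightarrow{\OP{L}}$. Both have the same finite slope $\theta_L$, so they are parallel and therefore either coincide or one lies strictly above the other throughout. Because $q\in\ell$ and $q$ lies strictly beneath $\overleftrightarrow{\OP{L}}$ by hypothesis, $\overleftrightarrow{\OP{L}}$ must sit strictly above $\ell$ everywhere. Composing this with the supporting-line property, $\overleftrightarrow{\OP{L}}$ is strictly above the graph of $\MC{C}$, and then $\OP{L}\subseteq\overleftrightarrow{\OP{L}}$ cannot meet $\MC{C}$ either.

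The only step that requires genuine work is verifying the supporting-line property for slopes squeezed between the two one-sided derivatives; everything after that is purely geometric, since two parallel lines cannot cross and so any parallel translate above a curve-dominating line is itself curve-dominating.
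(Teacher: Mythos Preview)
Your proof is correct and takes a genuinely different route from the paper's.  The paper argues by contradiction: assuming an intersection point $q'=\BIGP{x',\MC{C}(x')}$ exists, it invokes the monotonicity of one-sided derivatives of a concave function (for $x_l<x_r$, $\lim_{x\to x_l^+}\MC{C}'(x)\ge\lim_{x\to x_r^-}\MC{C}'(x)$) and observes that for the curve to climb from below $\overleftrightarrow{\OP{L}}$ at $x_q$ up to $\overleftrightarrow{\OP{L}}$ at $x'$, some intermediate derivative would have to exceed $\theta_L$ on the wrong side of $x_q$, violating that monotonicity.  Your argument instead packages the same concavity into the supporting-line (superdifferential) inequality at $x_q$: the line $\ell$ of slope $\theta_L$ through $q$ dominates the whole graph, and since $\overleftrightarrow{\OP{L}}$ is a parallel translate of $\ell$ lying above $q$, it dominates the graph as well.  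Your route is shorter and more conceptual, leaning on a standard convex-analysis fact; the paper's route is marginally more self-contained but requires a mean-value-type step and a case split on the sign of $x'-x_q$.  One small caveat: you assume $q$ lies \emph{strictly} beneath $\overleftrightarrow{\OP{L}}$, whereas the statement only says ``under''; if $q\in\overleftrightarrow{\OP{L}}$ then $\ell=\overleftrightarrow{\OP{L}}$ and the graph can touch $\overleftrightarrow{\OP{L}}$ at $q$, so you should either note that the intended reading is strict or handle the boundary case separately (the paper's proof has the same implicit assumption).
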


To help compute the upper-hull of $\MB{S} \cap \MC{I}_\OP{L}$, we use a data structure due to Guibas et al~\cite{Guibas:1990:CIT:320176.320195}.
%, called \emph{compact interval tree}, 
%to help reconstruct the upper-hull of the points that lie within any given vertical strip of interest.
%
For a given simple path $\MC{P}$ of $n$ points with an $x$-sorted ordering of the points, with $\MC{O}(n\log\log n)$ preprocessing time and space, the upper-hull of any subpath $p \in \MC{P}$ can be assembled efficiently in $\MC{O}(\log n)$ time, represented by a balanced search tree that allows binary search on the hull edges.
Note that, $q_1, q_2, \ldots, q_n$ is exactly a simple path by definition.
The subpath to which $\MC{I}_\OP{L}$ corresponds can be located in $\MC{O}(\log n)$ time.
% and so does 
% by binary searches.
%
%Let $\OP{UH}_\OP{L}$ denote 
In $\MC{O}(\log n)$ time we can obtain the corresponding upper-hull and test the
%
%The 
condition specified in Lemma~\ref{lemma-segment-query-pruning}.
% can be tested in $\MC{O}(\log n)$ time as well. 
We conclude with the following lemma.

\begin{lemma} \label{thm-pe-query}
The \emph{one-sided segment sweeping query} can be answered in $\MC{O}(\log n)$ time, where $n$ is the number of points, using $\MC{O}(n\log n)$ preprocessing time and $\MC{O}(n\log\log n)$ space.
\end{lemma}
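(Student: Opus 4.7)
The plan is to reduce the query to a binary search on the upper hull of $\MB{S}\cap\MC{I}_\OP{L}$, justified by Lemma~\ref{lemma-segment-query-pruning}. Observe that a point of $\MB{S}$ lies in $\OP{L}^+$ iff it lies in the strip $\MC{I}_\OP{L}$ and above $\overleftrightarrow{\OP{L}}$, and the upper hull of $\MB{S}\cap\MC{I}_\OP{L}$ $y$-dominates every point of that set at every $x\in\MC{I}_\OP{L}$. Hence the query answer is \emph{yes} iff the upper hull rises above $\overleftrightarrow{\OP{L}}$ somewhere within $\MC{I}_\OP{L}$. Since slopes along a concave (upper-hull) polyline are monotonically decreasing, there is a unique hull vertex $q$ whose adjacent-edge slopes bracket $\theta_L$; by Lemma~\ref{lemma-segment-query-pruning}, the entire hull lies below $\overleftrightarrow{\OP{L}}$ iff $q$ does. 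Thus a single comparison of $q$ against $\overleftrightarrow{\OP{L}}$ settles the query.

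For preprocessing, I first sort $\MB{S}$ by $x$-coordinate in $\mathcal{O}(n\log n)$ time, producing the simple path $q_1,q_2,\ldots,q_n$; this sort accounts for the $\mathcal{O}(n\log n)$ preprocessing term in the statement. On top of this ordering I build the data structure of Guibas et al.~\cite{Guibas:1990:CIT:320176.320195} in $\mathcal{O}(n\log\log n)$ additional time and space. The structure supports, for any pair of indices $i\le j$, assembly in $\mathcal{O}(\log n)$ time of the upper hull of $q_i,\ldots,q_j$ as a balanced search tree indexing the hull edges in left-to-right order.

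To answer a query $\MC{Q}(\OP{L})$, I first perform two binary searches on the sorted $x$-coordinates to identify in $\mathcal{O}(\log n)$ time the contiguous index range $[i,j]$ of points of $\MB{S}$ that lie in $\MC{I}_\OP{L}$. The Guibas et al.\ structure then returns the upper hull of $q_i,\ldots,q_j$ in $\mathcal{O}(\log n)$ time. Since hull-edge slopes decrease monotonically, a binary descent in the returned tree locates in $\mathcal{O}(\log n)$ time the vertex $q$ satisfying the slope-bracketing condition of Lemma~\ref{lemma-segment-query-pruning}; comparing $q$ against $\overleftrightarrow{\OP{L}}$ resolves the query. Every step costs $\mathcal{O}(\log n)$, yielding the claimed query time, and the space is dominated by the $\mathcal{O}(n\log\log n)$ usage of the Guibas et al.\ structure.

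The main technical point to verify is that the balanced tree produced by the Guibas et al.\ structure exposes its hull edges in left-to-right order so that binary descent by slope is supported in $\mathcal{O}(\log n)$ time; this is a documented feature of that data structure and requires no new machinery. Minor care is needed at the extremes of the slope range, where the bracketing vertex is simply the leftmost or rightmost hull point, and for points incident to $\overleftrightarrow{\OP{L}}$; both cases are routine.
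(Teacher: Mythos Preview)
Your proposal is correct and follows essentially the same approach as the paper: reduce via Lemma~\ref{lemma-segment-query-pruning} to locating the slope-bracketing vertex on the upper hull of $\MB{S}\cap\MC{I}_\OP{L}$, and realize this using the Guibas et al.\ subpath-hull structure after an $x$-sort. If anything, your write-up is more explicit than the paper's own sketch about why the single vertex comparison suffices and about where the $\MC{O}(n\log n)$ preprocessing term comes from.
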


%%

%%
%%

%%%%%%%%%%%%%%%%%%%%%%%%%%%%%%%%%%%%%%%
%%
%%

\section{Conclusion}

We conclude with a brief discussion as well as an overview on future work.
In this paper, we give an optimal algorithm for the time-convex hull in the presence of a straight-line highway under the general $L_p$-metric where $1\le p\le \infty$.
The structural properties we provide 
%are conceivable yet 
involve non-trivial geometric arguments.
%, however, 
We believe that our algorithm and the approach we use can serve as a base to the scenarios for which we have a more complicated transportation infrastructure, e.g., modern city-metros represented by line-segments of different moving speeds.
Furthermore, we believe that approaches supporting dynamic settings to a certain degree, e.g., point insertions/deletions, or, dynamic speed transitions, are also a nice direction to explore.

%%%
%%%

%%%
%%%

\bibliographystyle{plain}
\bibliography{highway-hull-lp-metric}

%
%
%

%%%
%%%

\newpage

%%%
%%%

\begin{appendix}

\normalsize

%
%

%%%%%%%%%%%%%%%%%%%%%%%%%%%%%%%%%%%%%%%
%%
%%

\section{Hull-Structure in General the $L_p$-Metrics}

\subsection{Wavefronts and Shortest Time-Paths}

%%%%%

\begin{ap_lemma}{\ref{lemma-shortest-time-path-bisecting-set}}
For each $\MC{C} \in \OP{STP}(q_1,q_2)$, we have $\MC{C} \cap \OP{Bisect}(q_1,q_2) \ne \emptyset$.
%Each shortest time-path between $q_1$ and $q_2$ contains a point in $\OP{Bisect}(q_1,q_2)$,
Moreover, for each $q \in \OP{Bisect}(q_1,q_2)$, there exists $\MC{C}^\prime \in \OP{STP}(q_1,q_2)$ such that $q \in \MC{C}^\prime$.
%a shortest time-path betwee containing $q$.
\end{ap_lemma}

\begin{proof}[Proof of Lemma~\ref{lemma-shortest-time-path-bisecting-set}]
%The first part is easy to see. 
For any $\MC{C} \in \OP{STP}(q_1,q_2)$ and any $q \in \MC{C}$, $\hat{d}\BIGP{q_1,q}$ is a continuous function taking the values from $0$ to $\hat{d}\BIGP{q_1,q_2}$. Therefore there exists a point $q_{\nicefrac{1}{2}}$ such that $\hat{d}\BIGP{q_1,q_{\nicefrac{1}{2}}} = \frac{1}{2}\hat{d}\BIGP{q_1,q_2}$. This implies $\hat{d}\BIGP{q_{\nicefrac{1}{2}},q_2} \le \frac{1}{2}\hat{d}\BIGP{q_1,q_2}$, where the equality must hold since $\MC{C}$ is a shortest time-path between $q_1$ and $q_2$.
This shows that $q_{\nicefrac{1}{2}} \in W\BIGP{q_1,\frac{1}{2}\hat{d}\BIGP{q_1,q_2}} \cap W\BIGP{q_2,\frac{1}{2}\hat{d}\BIGP{q_1,q_2}}$.
Furthermore, we have $W\BIGP{q_1,t} \cap W\BIGP{q_2,t} = \emptyset$ for all $t < \frac{1}{2}\hat{d}\BIGP{q_1,q_2}$, for otherwise it will result in a shorter time-path than $\MC{C}$, which is a contradiction. Therefore $q_{\nicefrac{1}{2}} \in \OP{Bisect}(q_1,q_2)$.
This also shows that $\hat{t}_{\nicefrac{1}{2}}(q_1,q_2) = \frac{1}{2}\hat{d}\BIGP{q_1,q_2}$.

\smallskip

To see the second half holds, for each $q \in \OP{Bisect}(q_1,q_2)$, we have $\hat{d}(q_1,q) = \hat{d}(q,q_2) = \hat{t}_{\nicefrac{1}{2}}(q_1,q_2) = \frac{1}{2}\hat{d}\BIGP{q_1,q_2}$.
Joining a shortest time-path from $q_1$ to $q$ and a shortest time-path from $q$ to $q_2$, we get a time-path, say, $\MC{C}^\prime$, whose length is $\hat{d}(q_1,q_2)$. Therefore $\MC{C}^\prime \in \OP{STP}(q_1,q_2)$.
\end{proof}

%%%%%

\medskip

%%%%%

\begin{ap_lemma}{\ref{lemma-wavefront-recursion}}
$W_p(q,t)$ is formed by the boundary of
%\vspace{-25pt}
$$
%\hspace{220pt} 
\MC{C}^\circ_p(q,t) \cup \bigcup_{s\colon s\in \MBB{R}^2, \\ \hat{d}_p(q,s) < d_p(q,s) \le t} W_p\BIGP{s,t-\hat{d}_p(q,s)},$$
where $\MC{C}^\circ_p(q,t)$ is the $p$-circle with center $q$ and radius $t$.
\end{ap_lemma}

\begin{proof}[Proof of Lemma~\ref{lemma-wavefront-recursion}]
To see why this lemma holds, it suffices to observe that, for any $s \in \MBB{R}^2$ such that $\hat{d}_p(q,s) = d_p(q,s)$, the wavefront of $\MC{C}^\circ_p(s,t-\hat{d}_p(q,s))$ is dominated by $\MC{C}^\circ_p(q,t)$.
\end{proof}

%%%%%

\medskip

%%%%%

\begin{ap_lemma}{\ref{lemma-wavefront-from-highway}}
For $1<p<\infty$, $v_\MC{H} < \infty$, and a point $q \in \MC{H}$ which we assume to be $(0,0)$ for the ease of presentation, the upper-part of $W_p(q,t)$ that lies above $\MC{H}$ consists of the following three pieces:
\begin{compactitem}
	\item
		$\OP{Fan}_p(q,t)$: the circular-sector of the $p$-circle with radius $t$, ranging from $\BIGP{-\hat{x}_t,y_p(-\hat{x}_t,t)}$ to $\BIGP{\hat{x}_t,y_p(\hat{x}_t,t)}$.
	\item
		$\OP{LT}_p(q,t)$, $\OP{RT}_p(q,t)$: two line segments joining $\BIGP{-v_\MC{H}\cdot t,0}$, $\BIGP{-\hat{x}_t,y_p(-\hat{x}_t,t)}$, and $\BIGP{\hat{x}_t,y_p(\hat{x}_t,t)}$, $\BIGP{v_\MC{H}\cdot t,0}$, respectively. Moreover, $\OP{LT}_p(q,t)$ and $\OP{RT}_p(q,t)$ are tangent to $\OP{Fan}_p(q,t)$.
\end{compactitem}
The lower-part that lies below $\MC{H}$ follows symmetrically.
For $v_\MC{H} = \infty$, the upper-part of $W_p(q,t)$ consists of a horizontal line $y = t$.
\end{ap_lemma}

\begin{proof}[Proof of Lemma~\ref{lemma-wavefront-from-highway}]
By Lemma~\ref{lemma-wavefront-recursion}, it suffices to consider the boundary of
$$\bigcup_{s\colon \MC{H}, \hat{d}_p(q,s) \le t} W_p\BIGP{s,t-\hat{d}_p(q,s)}.$$
That is, the boundary of the union of the $p$-circles $\MC{C}(\gamma)$ with center $\gamma$ and radius $t-\frac{\gamma}{v_\MC{H}}$, where $\gamma$ is a real number with $-v_\MC{H}\cdot t \le \gamma \le v_\MC{H}\cdot t$.
See also Fig.~\ref{fig-on-highway-wavefront-circles}~(a) for an illustration.

\begin{figure*}[h]
%\fbox
\hspace{-20pt}\includegraphics[scale=1.4]{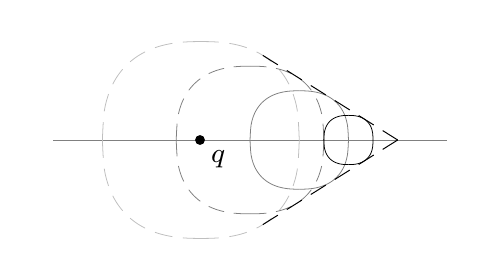}
\hspace{-20pt}\includegraphics[scale=1.4]{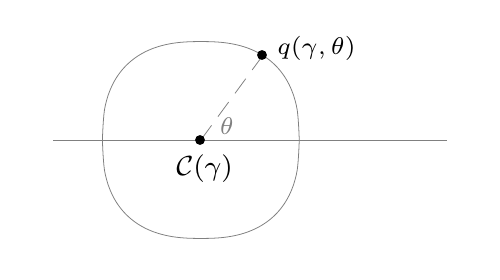}
\caption{The propagation of the $p$-circles along the highway $\MC{H}$, where in this figure $p = 2.5$.}
\label{fig-on-highway-wavefront-circles}
\end{figure*}

First we argue that the upper-right boundary of $$\bigcup_{0 \le \gamma \le v_\MC{H}\cdot t}\MC{C}(\gamma)$$ contains a line segment that is tangent to $\MC{C}(0)$.
For each $\theta$ with $0\le \theta\le \pi/2$ and each $\MC{C}(\gamma)$, let $q(\gamma,\theta)$ denote the specific point of $\MC{C}(\gamma)$ such that the line between $q(\gamma,\theta)$ and the center of $\MC{C}(\gamma)$ forms an angle of $\theta$ above the horizontal line. See also Fig.~\ref{fig-on-highway-wavefront-circles}~(b).

For a fixed $\theta$, the coordinates of $q(\gamma,\theta)$ is given by the following equations.
\begin{align*}
\begin{cases}
	y_q = \BIGP{x_q-\gamma} \cdot\tan\theta \\
	\BIGP{x_q-\gamma}^p+y_q^p = \BIGP{t-\frac{\gamma}{v_\MC{H}}}^p
\end{cases}
\end{align*}
Simplifying the equations, we get
$$y_q = \frac{t-\frac{\gamma}{v_\MC{H}}}{\BIGP{1+\tan^p\theta}^{1/p}}\cdot\tan\theta,$$
which implies that $y_q$ is a linear function of $\gamma$.
This shows that, for fixed $\theta$, the trace of $q(\gamma,\theta)$ as $\gamma$ moves from $0$ to $v_\MC{H}t$ is exactly a line segment with the right end-point $(v_\MC{H}t,0)$.
Since this holds for all $\theta$, and, since we have one-to-one correspondence between $\theta$ and a point on the $p$-circle, union of these segments is exactly the set $\bigcup_{0 \le \gamma \le v_\MC{H}\cdot t}\MC{C}(\gamma)$.
Hence the upper-right boundary will be the upper-most line segment, which is exactly the one tangent to $\MC{C}(0)$.

\smallskip

Below, we show that the corresponding tangent point is given by $\BIGP{\hat{x}_t, y_p(\hat{x}_t,t)}$.
Consider the curve of the $p$-circle $\MC{C}(0)$ in the first quadrant. The equation is given by $y = \BIGP{t^p-x^p}^{1/p}$. Hence the slope of the line tangent at any specific point is given by
$$\frac{dy}{dx} = -\BIGP{t^p-x^p}^{\frac{1-p}{p}}x^{p-1}.$$
The point $(x,y)$ of interest has a tangent line that passes through $(v_\MC{H},0)$. Therefore,
$$\frac{y}{x-v_\MC{H}} = -\BIGP{t^p-x^p}^{\frac{1-p}{p}}x^{p-1}.$$
Simplifying the equation we obtain the point $\BIGP{\hat{x}_t, y_p(\hat{x}_t,t)}$ as claimed.

\medskip

For the case $v_\MC{H} = \infty$, notice that we have
$$\bigcup_{s\colon \MC{H}, \hat{d}_p(q,s) \le t} W_p\BIGP{s,t-\hat{d}_p(q,s)} = \bigcup_{s\in \MC{H}}W_p\BIGP{s,t}.$$
The upper-part of the boundary is exactly the horizontal line $y=t$.
\end{proof}

%%%%%

\medskip

%%%%%

%

%\paragraph{Regarding the incidence angle, Lemma~\ref{lemma-incidence-angle}.}
\subsection{Regarding the incidence angle, Lemma~\ref{lemma-incidence-angle}.}

\begin{ap_lemma}{\ref{lemma-incidence-angle}}
For any point $q = \BIGP{x_q,y_q}$, $1\le p < \infty$, and $1<v_\MC{H}\le\infty$, if a shortest time-path starting from $q$ uses the highway $\MC{H}$, then it must enter the highway with an incidence angle $\alpha$.
\end{ap_lemma}

\smallskip

%

%The case when $p = 1$ has already been studied and 
%
Before proving Lemma~\ref{lemma-incidence-angle}, we need more structural properties regarding the propagation of wavefront under the influence of the highway $\MC{H}$.
Let $q = (0,y_q)$ be a point. First, we argue that, for any $t \ge y_q$ and for all $q^\prime \in \MC{H}$ such that $d_p(q,q^\prime) \le t$, $\OP{Fan}_p\BIGP{q^\prime,t-d_p(q,q^\prime)}$ does not belong to the boundary of $W_p\BIGP{q,t}$.
The trace of each point on $\OP{Fan}_p\BIGP{q^\prime,t-d_p(q,q^\prime)}$ represents a shortest time-path from $q^\prime$ to that point.
Therefore by the triangle inequality, we know that $\OP{Fan}_p\BIGP{q^\prime,t-d_p(q,q^\prime)}$ will be dominated by the trace of the $p$-circle with center $q$.

\smallskip

Below, we prove this lemma for a constrained situation where we have a point $q_0$ on the highway $\MC{H}$ and a point $q$ off the highway.

\begin{lemma}
For $q_0 = (0,0) \in \MC{H}$ and any $q=(x_q,y_q) \in \MBB{R}^2$ such that $x_q \ge 0$ and $y_q > 0$,
the shortest time-path between $q_0$ and $q$ is
(1) the straight line-segment joining $q_0$ and $q$, if $x_q \le y_q\tan\alpha$, or
(2) the straight line-segment from $q$ into highway $\MC{H}$ with incidence angle $\alpha$ toward the direction of $q_0$, followed by a traversal along the highway to $q_0$, if $x_q \ge y_q\tan\alpha$.
\end{lemma}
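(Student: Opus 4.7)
My plan is to combine the wavefront structure from Lemma~\ref{lemma-wavefront-from-highway} with a single-parameter convex optimization. Since $q_0 \in \MC{H}$, any candidate shortest time-path from $q_0$ to $q$ is either (i) the direct straight segment $\overline{q_0 q}$ with time $T_1 = (x_q^p + y_q^p)^{1/p}$, or (ii) a path that travels along $\MC{H}$ from $q_0$ to some exit point $s = (\gamma, 0)$ and then takes a straight off-highway segment to $q$, with time $T(\gamma) = |\gamma|/v_\MC{H} + (|x_q - \gamma|^p + y_q^p)^{1/p}$. That no other forms need be considered is exactly what Lemma~\ref{lemma-wavefront-from-highway} encodes: the upper half of $W_p(q_0, t)$ consists of a fan arc (traced by direct segments from $q_0$) and two tangent segments (traced by highway-then-segment paths), so any shortest time-path from $q_0$ has one of these two forms.

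Next, I would analyze $T(\gamma)$ by calculus. The map $\gamma \mapsto T(\gamma)$ is convex on $\MBB{R}$ since it is the sum of the convex functions $|\gamma|/v_\MC{H}$ and the $L_p$-norm $\|(x_q - \gamma, y_q)\|_p$. On the interior $(0, x_q)$, the stationarity condition $T'(\gamma) = 0$ becomes
\[
\frac{(x_q - \gamma)^{p-1}}{\bigl((x_q - \gamma)^p + y_q^p\bigr)^{(p-1)/p}} \;=\; \frac{1}{v_\MC{H}},
\]
which rearranges to $(x_q - \gamma)^p (v_\MC{H}^{p/(p-1)} - 1) = y_q^p$. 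A short algebraic manipulation on the definition of $\alpha$ in the excerpt yields the identity $(v_\MC{H}^{p/(p-1)} - 1)^{-1/p} = \tan\alpha$, so the unconstrained minimizer is $\gamma^\ast = x_q - y_q\tan\alpha$.

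The two cases of the lemma now follow by comparing $\gamma^\ast$ to $0$. If $x_q \ge y_q\tan\alpha$, then $\gamma^\ast \in [0, x_q]$ is the global minimizer of $T$ on $\MBB{R}$, and the segment from $q$ to $(\gamma^\ast, 0)$ has horizontal offset $y_q\tan\alpha$ over vertical offset $y_q$, exhibiting incidence angle $\alpha$ at the highway; this is case (2). If $x_q \le y_q\tan\alpha$, then $\gamma^\ast \le 0$, and a direct computation gives $T'(0^+) \ge 0$ while $T'(0^-) < 0$, so the subgradient of $T$ at $\gamma = 0$ contains zero; by convexity the global minimum of $T$ is attained at $\gamma = 0$ with value $T(0) = T_1$, meaning the optimal path is the straight segment $\overline{q_0 q}$, matching case (1). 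The main obstacle I anticipate is the algebraic identification $(v_\MC{H}^{p/(p-1)} - 1)^{-1/p} = \tan\alpha$ from the somewhat opaque definition of $\sin\alpha$ in the excerpt, together with the need to handle the degenerate cases $p = 1$ and $v_\MC{H} = \infty$ separately (both yield $\alpha = 0$, so the fan collapses and the straight-line conclusion only applies when $x_q = 0$, while for any $x_q > 0$ the highway must be used).
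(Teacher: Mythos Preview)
Your approach is correct but takes a genuinely different route from the paper. The paper argues via the bisecting-set machinery of Lemma~\ref{lemma-shortest-time-path-bisecting-set}: it picks $q' \in \OP{Bisect}(q_0,q)$ and performs a case analysis on which piece of $W_p\bigl(q_0,\hat{t}_{\nicefrac{1}{2}}(q_0,q)\bigr)$ the point $q'$ lies on---$\OP{Fan}_p$, $\OP{RT}_p \setminus \MC{H}$, or $\MC{H}$ itself---using tangency of $p$-circles to deduce collinearity (case~(a)) or incidence angle $\alpha$ (case~(b)), with a recursion when $q' \in \MC{H}$ (case~(c)). You instead parameterize by the highway exit point and solve a single-variable convex minimization, invoking Lemma~\ref{lemma-wavefront-from-highway} only to justify restricting to the two candidate path shapes. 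Your argument is more elementary and makes the emergence of $\tan\alpha$ transparent through the first-order condition (and your verification of the identity $(v_\MC{H}^{p/(p-1)}-1)^{-1/p}=\tan\alpha$ from the stated $\sin\alpha$ is correct); the paper's argument stays entirely inside the wavefront-trace framework and avoids calculus, at the price of a three-case split and an iterative reduction. Both rest on the same structural input---the fan/tangent decomposition of the on-highway wavefront---so the difference is one of style rather than power; your remark that $p=1$ and $v_\MC{H}=\infty$ (both giving $\alpha=0$) need separate treatment is apt and matches how the paper defines $\alpha$ in those degenerate cases.
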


\begin{proof}
Let $q^\prime = (x_{q^\prime}, y_{q^\prime})$ be a point in $\OP{Bisect}(q_0,q)$. We consider the following cases. 
\begin{enumerate}[(a)]
	\item
		If $q^\prime \in \OP{Fan}_p(q_0,\hat{t}_{\nicefrac{1}{2}}(q_0,q))$, then we know that the trace of $q^\prime$ belongs to the $p$-circle $\MC{C}^\circ(q,\hat{t}_{\nicefrac{1}{2}}(q_0,q))$. By the symmetry of $p$-circles, we know that $\MC{C}^\circ(q,\hat{t}_{\nicefrac{1}{2}}(q_0,q))$ and $\OP{Fan}_p(q_0,\hat{t}_{\nicefrac{1}{2}}(q_0,q))$ share a common tangent line, and the three points $q_0, q, q^\prime$ are co-linear.
		Therefore from the trace of $p$-circles we know that the straight line-segment is a shortest time-path between $q_0$ and $q$.
		Moreover, since the angle $\OP{Fan}_p(q_0,\hat{t}_{\nicefrac{1}{2}}(q_0,q))$ spans is between $\pm \alpha$, we have $x_q \le y_q\tan\alpha$.
		
	\item
		If $q^\prime \in \OP{RT}_p(q_0,\hat{t}_{\nicefrac{1}{2}}(q_0,q)) \backslash \MC{H}$, then $q^\prime$ must come from the $p$-circle $\MC{C}^\circ(q,\hat{t}_{\nicefrac{1}{2}}(q_0,q))$. Moreover, from the perspective of $\OP{RT}_p(q_0,\hat{t}_{\nicefrac{1}{2}}(q_0,q)) \backslash \MC{H}$, by the proof of Lemma~\ref{lemma-wavefront-from-highway}, the trace of the point $q^\prime$ comes from another $p$-circle from a point, say, $q^{\prime\prime} \in \MC{H}$ such that $q^{\prime\prime}$ and $q^\prime$ forms an angle of $\alpha$ to the vertical line.
		Therefore the points $q$, $q^\prime$, and $q^{\prime\prime}$ are again co-linear, and the trace of $q^\prime$ from both sides, which are the line segments $\overline{qq^{\prime\prime}}$ and $\overline{q^{\prime\prime}q_0}$, is a shortest time-path between $q_0$ and $q$.
		
	\item
		Finally, if $q^\prime \in \MC{H}$, then we know that $q^\prime$ does not belong to the $p$-circle $\MC{C}^\circ_p(q,\hat{t}_{\nicefrac{1}{2}}(q_0,q))$ as it will touch $\MC{H}$ to the right of $q^\prime$ before it touches $q^\prime$.
		Therefore, from (a) we know that $q_x > y_q\tan\alpha + x_{q^\prime}$ for otherwise the trace of the $p$-circle $\MC{C}^\circ_p(q,\hat{t}_{\nicefrac{1}{2}}(q_0,q))$ will touch $q^\prime$ directly.
		Hence, in the situation between $q^\prime \in \MC{H}$ and $q \notin \MC{H}$, we know that case (a) will never happen, and we either have case (b) or another recursion.
		After each iteration, the remaining portion of $\MC{H}$ between the two points are shrunk by at least $y_q$. Therefore we will eventually end up with case (b).
\end{enumerate}
This proves the lemma.
\end{proof}

\medskip

For any point $q = (x_q,y_q) \in \MBB{R}^2$ with $y_q \ge 0$, let $q^+_\MC{H}$ and $q^-_\MC{H}$ be two points with coordinates $\BIGP{x_q \pm y_q\tan\alpha,0}$, respectively. 
Below, we characterize the behavior of the shortest time-paths between any pair of points.
Let $q_1 = (x_1,y_1), q_2 = (x_2,y_2) \in \MBB{R}^2$ be two points such that $x_1 \le x_2$ and $y_1,y_2 \ge 0$.

Let $q$ be a point in $\OP{Bisect}(q_1,q_2)$.
If $q \in \MC{H}$, then by the lemma above, we know that there exists shortest time-path that uses the highway $\MC{H}$ with incidence angle $\alpha$. Otherwise, if $q \notin \MC{H}$, we have two cases to consider.
(1) If $q$ belongs to both $\MC{C}^\circ_p(q_1,\hat{t}_{\nicefrac{1}{2}}(q_1,q_2))$ and $\MC{C}^\circ_p(q_2,\hat{t}_{\nicefrac{1}{2}}(q_1,q_2))$, then the straight line-segment joining $q_1$ and $q_2$ is a shortest time-path between $q_1$ and $q_2$.
(2) Otherwise, we know that $q$ belongs to a $p$-circle and a left-tangent $\OP{LT}_p$, which implies the existence of a shortest time-path that uses the highway $\MC{H}$ with incidence angle $\alpha$.

We have considered all possible cases and obtained that, whenever a shortest time-path uses the highway $\MC{H}$, it has an incidence angle $\alpha$. This proves Lemma~\ref{lemma-incidence-angle}.

%%%%%%%

\medskip

%%%%%%%

%%
%\begin{figure*}[h]
%\fbox
%{\includegraphics[scale=1]{}}
%\caption{\mong{}.}
%\label{fig-bisecting-wavefront-tangent}
%\end{figure*}
%%

%%%%%%

%\paragraph{The $L_\infty$-metric.}
\subsection{The $L_\infty$-metric.}

%

%\mong{***************}
%
Below we discuss the case when $p = \infty$. 
While the approach and also the arguments to deriving the properties for the $L_\infty$ metric are similar to that we use for $1<p<\infty$, we state only the key properties
% and key ideas 
without repeating 
%and omit 
the technical details.
First, regarding the propagation of the wavefront from a point of the highway $\MC{H}$, we have the following lemma.

\begin{lemma}
For $v_\MC{H} < \infty$, and a point $q \in \MC{H}$ which we assume to be $(0,0)$ for the ease of presentation, the upper-part of $W_\infty(q,t)$ that lies above $\MC{H}$ consists of the following three pieces:
\begin{compactitem}
	\item
		$\OP{Fan}_\infty(q,t)$: the circular-sector of the $\infty$-circle with radius $t$, which is a horizontal line segment, ranging from $\BIGP{-t,t}$ to $\BIGP{t,t}$.
	\item
		$\OP{LT}_\infty(q,t)$, $\OP{RT}_\infty(q,t)$: two line segments joining $\BIGP{-v_\MC{H}\cdot t,0}$, $\BIGP{-t,t}$, and $\BIGP{t,t}$, $\BIGP{v_\MC{H}\cdot t,0}$, respectively.
\end{compactitem}
The lower-part that lies below $\MC{H}$ follows symmetrically.
For $v_\MC{H} = \infty$, the upper-part of $W_\infty(q,t)$ consists of a horizontal line $y = t$.
\end{lemma}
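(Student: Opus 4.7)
The plan is to reuse the propagation recurrence from Lemma~\ref{lemma-wavefront-recursion} and trace the upper envelope of the family of $\infty$-circles generated from points of the highway. By Lemma~\ref{lemma-wavefront-recursion}, since $q \in \MC{H}$ with $\hat{d}_\infty(q,s) = |s|/v_\MC{H}$ for every $s = (\gamma, 0) \in \MC{H}$, the wavefront $W_\infty(q,t)$ coincides with the boundary of the union of $\infty$-circles $\MC{C}(\gamma) := \MC{C}^\circ_\infty\BIGP{(\gamma,0), t - |\gamma|/v_\MC{H}}$ for $\gamma \in [-v_\MC{H} t, v_\MC{H} t]$. The crucial observation peculiar to $p=\infty$ is that each $\MC{C}(\gamma)$ is an axis-aligned square, so the upper envelope of this family is piecewise linear and can be read off the coordinates of the top-side endpoints, with no tangent-line computation required.

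For $v_\MC{H} < \infty$, I would split the analysis into the regions $|x|\le t$ and $|x|>t$. In the central strip $|x| \le t$, the square $\MC{C}(0)$ has its top side at height $t$, which dominates every other $\MC{C}(\gamma)$ since those have radii strictly smaller than $t$; this yields the horizontal segment from $(-t,t)$ to $(t,t)$, which is exactly $\OP{Fan}_\infty(q,t)$. For $x>t$, I would identify the optimal $\gamma \ge 0$ at each fixed $x$: a square $\MC{C}(\gamma)$ reaches horizontal coordinate $x$ only when $\gamma + \BIGP{t-\gamma/v_\MC{H}} \ge x$, i.e. $\gamma \ge (x-t)\,v_\MC{H}/(v_\MC{H}-1)$; as height decreases with $\gamma$, the supremum is attained with equality, which places the responsible point at the top-right corner of $\MC{C}(\gamma)$, namely $\BIGP{\gamma(1-1/v_\MC{H})+t,\ t-\gamma/v_\MC{H}}$. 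Letting $\gamma$ range from $0$ to $v_\MC{H} t$, this corner traces the segment from $(t,t)$ to $(v_\MC{H} t, 0)$, which is precisely $\OP{RT}_\infty(q,t)$; the symmetric argument for $x<-t$ gives $\OP{LT}_\infty(q,t)$. The lower half follows from the symmetry of the construction across $\MC{H}$.

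For $v_\MC{H} = \infty$, the time-distance $\hat{d}_\infty(q,s)$ vanishes for every $s \in \MC{H}$, so the recurrence reduces to the union $\bigcup_{s \in \MC{H}} W_\infty(s, t)$. Each such $W_\infty(s,t)$ is a square of side $2t$ centered on $\MC{H}$, so the union is the horizontal slab $|y| \le t$, whose upper boundary is the line $y = t$.

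The only subtlety I anticipate is confirming that no interior portion of any tilted square $\MC{C}(\gamma)$ pokes above the claimed envelope, but this is immediate from the piecewise linear structure of the square's top side together with monotonicity of the radius $t-|\gamma|/v_\MC{H}$; once the boundary-attaining corners are identified, convex-polygon reasoning finishes the argument. In contrast to the $1<p<\infty$ case, no derivative computation is needed, since the role played there by the tangent line becomes trivially the straight-line locus of square corners here.
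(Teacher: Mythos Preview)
Your proposal is correct. The paper does not actually write out a proof for this $L_\infty$ lemma; it only remarks that the approach and arguments parallel those for $1<p<\infty$ (Lemma~\ref{lemma-wavefront-from-highway}) and omits the details. Your argument follows the same template---reduce via Lemma~\ref{lemma-wavefront-recursion} to the upper envelope of the family $\MC{C}(\gamma)$ of concentric $p$-circles on $\MC{H}$, then trace that envelope---but replaces the paper's angle parameterization and tangent-point computation with a direct case split on $|x|\le t$ versus $|x|>t$, exploiting that $\infty$-circles are axis-aligned squares. This is the natural adaptation and, if anything, cleaner than forcing the $1<p<\infty$ argument through the non-differentiable corners of the square; your observation that the top-right corners $\BIGP{\gamma(1-1/v_\MC{H})+t,\ t-\gamma/v_\MC{H}}$ themselves trace the segment $\OP{RT}_\infty(q,t)$ is exactly the $L_\infty$ analogue of the paper's linearity-in-$\gamma$ claim for $q(\gamma,\theta)$.
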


\begin{lemma}
When $v_\MC{H} < \infty$, if a shortest time-path under the $L_\infty$ metric uses the highway $\MC{H}$, then it must enters the highway with an incidence angle $\alpha = \pi/4$.
When $v_\MC{H} = \infty$, a shortest time-path that uses the highway $\MC{H}$ may enter the highway with an incidence angle $-\pi/4 \le \alpha \le \pi/4$.
%Behavior of shortest time-paths that use the highway $\MC{H}$: two cases, $v_\MC{H} < \infty$ and $v_\MC{H} = \infty$.
\end{lemma}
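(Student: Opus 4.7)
The plan is to mirror the wavefront-based argument given for $1<p<\infty$, replacing smooth-$p$-circle tangencies with the degenerate geometry of $L_\infty$. As in that proof, I would first reduce to the situation where the source lies on the highway, say $q_0 = (0,0)$, and the destination is a point $q = (x_q, y_q)$ with $x_q, y_q > 0$; the general case with two off-highway endpoints then follows by the same case analysis of $\OP{Bisect}(q_1, q_2)$ together with Lemma~\ref{lemma-wavefront-recursion}, exactly as in the appendix proof of Lemma~\ref{lemma-incidence-angle}.

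By the $L_\infty$ wavefront lemma just stated, any time-path of length $t$ from $q_0$ that uses $\MC{H}$ terminates on one of the tangent segments $\OP{LT}_\infty(q_0, t)$ or $\OP{RT}_\infty(q_0, t)$, corresponding to a traversal along $\MC{H}$ from $(0,0)$ to some entry point $(\gamma, 0)$ followed by an off-highway leg to $q$. The total cost is
\[
T(\gamma) \;=\; \frac{\gamma}{v_\MC{H}} \;+\; \max\BIGP{\,x_q-\gamma,\; y_q\,}.
\]
The $\max$ equals $y_q$ on $\gamma \in [x_q - y_q,\, x_q + y_q]$ and $x_q - \gamma$ on $[0,\, x_q - y_q]$, so $T^\prime(\gamma) = 1/v_\MC{H} > 0$ on the first interval while $T^\prime(\gamma) = 1/v_\MC{H} - 1 < 0$ on the second, forcing the unique minimum to the breakpoint $\gamma = x_q - y_q$, i.e., the point $q^-_\MC{H}$ under $\tan\alpha = 1$. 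The off-highway segment $\overline{q\,q^-_\MC{H}}$ therefore makes incidence angle $\pi/4$ with the vertical; the symmetric argument for the other traversal direction gives the remaining incidence at $q^+_\MC{H}$, also at $\pi/4$.

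When $v_\MC{H} = \infty$ the highway traversal is free, so the cost from $q$ to any distant point via $\MC{H}$ collapses to the $L_\infty$ distance from $q$ to $\MC{H}$, namely $y_q$. Every entry point $(\gamma, 0)$ with $|\gamma - x_q| \le y_q$ achieves this minimum, and the corresponding incidence angle varies continuously over $[-\pi/4, \pi/4]$, yielding the second assertion.

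The main obstacle I anticipate is the degeneracy of the $L_\infty$ wavefront: the fan $\OP{Fan}_\infty(q_0, t)$ is a flat horizontal segment rather than a strictly convex arc, so the ``tangent'' from $(v_\MC{H} t, 0)$ collapses to a tangency at the corner $(t, t)$, and every interior point of the fan looks like a simultaneous first-contact candidate for the bisecting set. The explicit minimization of $T(\gamma)$ resolves this by showing that the slope of $T$ flips sign precisely at the corner, ruling out fan-interior entry whenever $v_\MC{H} < \infty$. The same degeneracy is exactly what legitimizes the non-uniqueness when $v_\MC{H} = \infty$, consistent with the second statement of the lemma.
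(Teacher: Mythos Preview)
Your argument is correct and follows the same overall scheme the paper sketches: reduce to a source on $\MC{H}$ via the $\OP{Bisect}$ case analysis of Lemma~\ref{lemma-incidence-angle}, then invoke the $L_\infty$ wavefront description to pin down the entry point. The paper itself gives no details here, merely asserting that the proof is ``similar to that we use for $1<p<\infty$,'' so there is nothing substantive to compare against.

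The one place where you depart from the paper's style is in replacing the geometric tangency argument (which for $1<p<\infty$ identifies the entry angle as the tangent point of a line from $(v_\MC{H}t,0)$ to the $p$-circle) by the explicit piecewise-linear minimization of $T(\gamma)$. This is the right move: in $L_\infty$ the fan is flat and the ``tangent'' degenerates to a corner, so the smooth argument is unavailable, and your observation that the slope of $T$ flips sign exactly at $\gamma = x_q - y_q$ is the clean substitute. It also makes transparent why the corner entry is \emph{forced} when $v_\MC{H}<\infty$ (the $1/v_\MC{H}$ term breaks ties) and why the entire interval $|\gamma - x_q|\le y_q$ is admissible when $v_\MC{H}=\infty$. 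One small point worth making explicit in the write-up: at the optimal $\gamma = x_q - y_q$ the off-highway leg has equal horizontal and vertical displacement, so the $L_\infty$-geodesic parallelogram of Lemma~\ref{lemma-closure-pair-of-points} degenerates to the single slope-$1$ segment, confirming that the incidence angle is genuinely $\pi/4$ and not merely attained by one representative path.
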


Also refer to Fig.~\ref{fig-wavefront-linfty} for an illustration on the propagation of wavefronts and the behavior of shortest time-paths under the $L_\infty$ metric.

\begin{figure*}[h]
%\fbox
{\includegraphics[scale=1.2]{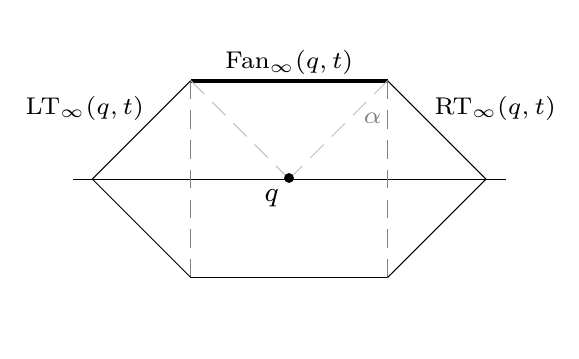}}
{\includegraphics[scale=1.2]{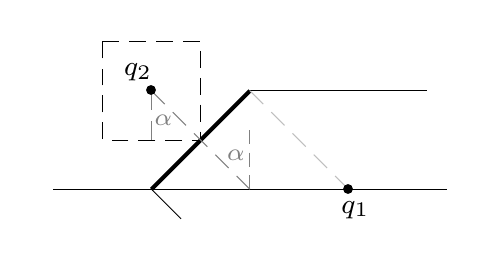}}
\caption{(a) The wavefront $W_\infty(q,t)$ from a point $q \in \MC{H}$. (b) The shortest time-path in $L_\infty$ when the highway is used.}
\label{fig-wavefront-linfty}
\end{figure*}
%%

%%%%%%%

\medskip

\subsection{Walking Regions}

\begin{ap_lemma}{\ref{lemma-wr-general}}
For any $p$ with $1\le p\le \infty$ and any point $q = (x_q,y_q)$ with $y_q \ge 0$, 
%The walking region of a point $q \in \MBB{R}^2$ 
$\OP{WR}_p(q)$ is characterized by the following two curve:
(a) \emph{right discriminating curve}, which is the curve $q^\prime = (x^\prime,y^\prime)$ satisfying $x^\prime\ge x_q + y_q\tan\alpha$ and
%\vspace{-20pt}
\begin{align}
\BIGC{\overline{qq^\prime}}_p = \BIGC{\overline{qq^+_\MC{H}}}_p + \BIGC{\overline{q^\prime q^{\prime-}_\MC{H}}}_p + \frac{1}{v_\MC{H}}\BIGC{\overline{q^+_\MC{H}q^{\prime-}_\MC{H}}}_p.
\label{eq-wr-general}
\end{align}
(b) The \emph{left discriminating curve} is symmetric with respect to the line $x=x_q$. 
\end{ap_lemma}

\begin{proof}[Proof of Lemma~\ref{lemma-wr-general}]
This lemma follows directly from Lemma~\ref{lemma-incidence-angle}. Deriving the equation of the curve of the walking-region for $p$ with $1\le p<\infty$ is straight-forward. Describing the boundary of the walking-region for $L_\infty$, however, is a bit more tricky.
Below we discuss in more detail.
% Below we derive the equation for specific $p$ and $v_\MC{H}$ that gives different insights.

\smallskip

%
%The walking-regions for the case $p=1$ are already shown in Proposition~\ref{prop-l1-previous} which is due to Aloupis et al.~\cite{Aloupis:2010:HHR:1598101.1598676}.
%
%For the case when $1<p<\infty$, expand Eq.~(\ref{eq-wr-general}) and we get

%

The two lines with slopes $\pm 1$ that pass through $q$ divide the plane into four regions. Let $\OP{I}$, $\OP{II}$, and $\OP{III}$ denote the three specific regions that lie to the left of $q$ from top to bottom. 
Also refer to Fig.~\ref{fig-wr-general-l-infty}~(a) for an illustration.

For points $q^\prime =(x^\prime, y^\prime)\in \OP{I}$, we have $d_\infty(q,q^\prime) = y^\prime-y_q$, while $d_\infty(q,\MC{H}) + d_\infty(q^\prime,\MC{H}) = y^\prime+y_q$. This implies any shortest time-path between $q$ and $q^\prime$ will never enter $\MC{H}$.
Therefore $\OP{I} \subseteq \OP{WR}_\infty(q)$.
Similarly we have $\OP{II} \subseteq \OP{WR}_\infty(q)$.

Consider a point $q^\prime = (x^\prime,y^\prime) \in \OP{III}$, we have $d_\infty(q,q^\prime) = x_q-x^\prime$. Expanding Eq.~(\ref{eq-wr-general}) we get
$$\BIGP{x_q-x^\prime} = y_q + y^\prime + \frac{1}{v_\MC{H}}\BIGP{(x_q-y_q) - (x^\prime+y^\prime)},$$
which simplifies to 
\begin{equation}
y^\prime = x_q-y_q -x^\prime. 
\label{eq-wr-linfty}
\end{equation}
Fig.~\ref{fig-wr-general-l-infty}~(b) shows the boundary of $\OP{WR}_\infty(q)$ for a point $q \in \MBB{R}^2$.

\begin{figure*}[h]
%\fbox
\begin{minipage}{0.4\textwidth}
\centering
\includegraphics[scale=1]{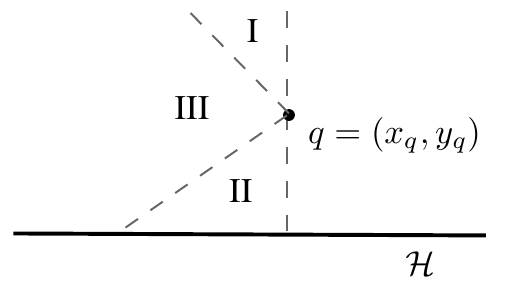}
\end{minipage}
\begin{minipage}{0.4\textwidth}
\centering
\includegraphics[scale=1]{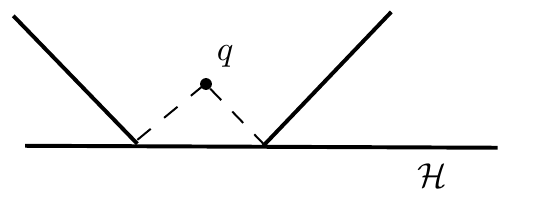}
\end{minipage}
\caption{(a) The three regions partitioned by the lines of slope $\pm 1$. (b) The walking region of a point $q \in \MBB{R}^2$ under $L_\infty$.}
\label{fig-wr-general-l-infty}
\end{figure*}

\end{proof}

%%%%%

\medskip

%%%%%

\begin{ap_lemma}{\ref{lemma-wr-dominance}}
Let $q_1 = (x_1,y_1)$ and $q_2 = (x_2, y_2)$ be two points such that $x_1 \le x_2$. If $y_1 \ge y_2$, then $\OP{WR}_{\ell}(q_2)$ lies to the right of $\OP{WR}_{\ell}(q_1)$.
Similarly, if $y_1 \le y_2$, then $\OP{WR}_{r}(q_1)$ lies to the left of $\OP{WR}_{r}(q_2)$.
\end{ap_lemma}

\begin{proof}[Proof of Lemma~\ref{lemma-wr-dominance}]

First we prove this lemma for the $p=1$ and $p=\infty$.
Assume that $y_1 \ge y_2$. 
%We distinguish the following cases.
\begin{itemize}
	\item
		For $p = 1$, by Proposition~\ref{prop-l1-previous}, it suffices to consider the relative position between the two lines characterizing the discriminating curve for $q_1$ and $q_2$, respectively.
		First, since $x_2 \ge x_1$, the line specified in Proposition~\ref{prop-l1-previous}~(b) for $q_2$ lies to the right of that for $q_1$.
		Similarly, since $y_2 \le y_1$, we have $\BIGP{x_2-y_2/\beta} - \BIGP{x_1 - y_1/\beta} \ge 0$, which implies that the vertical strip for $q_2$ also lies to the right of that for $q_1$.
		This implies that $\OP{WR}_{\ell}(q_2)$ lies to the right of $\OP{WR}_{\ell}(q_1)$.
		
	\item
		For $p = \infty$, by Eq.~\ref{eq-wr-linfty} from the proof of Lemma~\ref{lemma-wr-general}, we know that the left discriminating curve for a point $q = (x_q,y_q)$ is characterized by the linear equation $y = x_q-y_q-x$.
		Since we have $x_1-y_1 \le x_2-y_2$ by our assumption, we know that the left discriminating curve for $q_1$ lies to the left of that for $q_2$.
		
%	\item
%		For $p$ with $1<p<\infty$,
%		\mong{}
\end{itemize}
%From the above cases, we know that $\OP{WR}_\ell(q_2)$ always lies to the right of $\OP{WR}_\ell(q_1)$.
%
The other direction of this lemma follows from a symmetric argument.

\medskip

Below we prove for the case $1<p<\infty$.
To this end, observe that, it suffices to show that for two points $q = (0,y_q)$ and $q^\prime=(0,y_{q^\prime})$ such that $y_q \ge y_{q^\prime}$, $\OP{WR}_\ell(q^\prime)$ lies to the right of $\OP{WR}_\ell(q)$ and $\OP{WR}_r(q^\prime)$ lies to the left of $\OP{WR}_r(q)$.

\medskip

For the case when $v_\MC{H} = \infty$, we prove that, 
for any point $q_w^\prime \in \OP{WR}_{\ell}(q^\prime)$, except for the origin $(0,0)$, there always exists a point $q_w \in \OP{WR}_{\ell}(q)$ with the same $y$-coordinate such that $q_w$ lies to the left of $q_w^\prime$.
%at left side of $q_{wr}^\prime$ with the same $y$-coordinate. 
By Lemma~\ref{lemma-wr-general}, any point of $\OP{WR}_{\ell}(q^\prime)$ is given by $\OP{LT}(q^{\prime-}_{\MC{H}},t) \cap \MC{C}^\circ(q^\prime,y_{q^\prime}+t)$ for some $t \ge 0$, and there is a one-to-one correspondence.
% and  intersection point of $y= t_c$ and $(y_{q^\prime} + t_c)$-circle at center $q^\prime$ for some constant $t_c \geq 0$, and vice versa. 
%
%For any point on $\OP{WR}_{\ell}(q)$, it's all similar. 

Let $t_c$, $t_c \ge 0$, be the specific moment such that $$\OP{LT}(q^{\prime-}_{\MC{H}},t_c) \cap \MC{C}^\circ(q^\prime,y_{q^\prime}+t_c) = q_w^\prime,$$ 
and let 
$$q_w = \OP{LT}(q_{\MC{H}}^-,t_c) \cap \MC{C}^\circ(q,y_q+t_c).$$
We know that $q_w^\prime$ and $q_w$ share the same $y$-coordinate as they belong to $\OP{LT}(q^{\prime -}_{\MC{H}},t_c)$ and $\OP{LT}(q_{\MC{H}}^-,t_c)$, which are both given by the equation $y = t_c$.
Since $y_q \ge y_{q^\prime}$, we know that $\MC{C}^\circ(q^\prime,y_{q^\prime}+t_c)$ is completely contained by $\MC{C}^\circ(q,y_q+t_c)$,
%. This
which implies that $q_w$ lies to the left of $q_w^\prime$. This proves that $\OP{WR}_\ell(q^\prime)$ lies to the right of $\OP{WR}_\ell(q)$ for the case $v_\MC{H} = \infty$.
See also Fig.~\ref{fig-wr-dominance-infty} for an illustration.

\begin{figure*}[htp]
%\fbox
%\begin{minipage}{0.4\textwidth}
\centering
\includegraphics[scale=0.9]{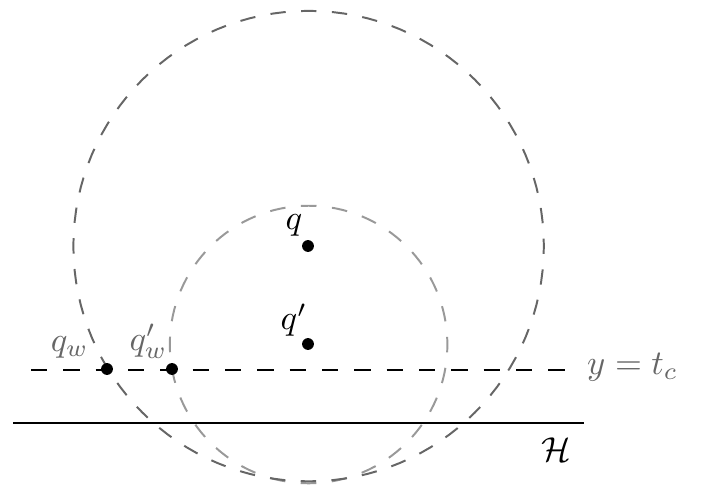}
%\end{minipage}
%\begin{minipage}{0.4\textwidth}
%\centering
%\includegraphics[scale=1]{wr-linfty}
%\end{minipage}
\caption{An illustration of the relative positions between $q_w$ and $q_w^\prime$ for $v_\MC{H} = \infty$.}
\label{fig-wr-dominance-infty}
\end{figure*}

\medskip

For the case $1 < v_\MC{H} < \infty$, 
we argue that for any point $q_w^\prime \in \OP{WR}_{\ell}(q^\prime) \backslash \MC{H}$, there always exists a point $q_w \in \OP{WR}_{\ell}(q)$ such that $q_w$ lies strictly to the lower-left of $q_w^\prime$.
By Lemma~\ref{lemma-wr-general}, any point of $\OP{WR}_{\ell}(q^\prime) \backslash \MC{H}$ is given by $\OP{LT}_{p}(q^{\prime -}_\MC{H},t) \cap \MC{C}^\circ\BIGP{q^\prime, d_p(q^\prime,q^{\prime -}_\MC{H}) + t}$ for some $t > 0$, and there is a one-to-one correspondence.
Let $t_c$, where $t_c > 0$, be the specific moment such that
$$\OP{LT}(q^{\prime -}_\MC{H},t_c) \cap \MC{C}^\circ\BIGP{q^\prime, d(q^\prime,q^{\prime -}_\MC{H}) + t_c} = q_w^\prime,$$
and let 
$$q_w = \OP{LT}(q^{-}_\MC{H},t_c) \cap \MC{C}^\circ\BIGP{q, d(q,q^{-}_\MC{H}) + t_c}.$$

Furthermore, since both $q^-_\MC{H}$ and $q^{\prime-}_\MC{H}$ belong to $\MC{H}$, it is not difficult to see that $\OP{W}(q^{-}_\MC{H},t_c)$ is a horizontal translation of $\OP{W}(q^{\prime -}_\MC{H},t_c)$ to the left-hand side by $d(q^{-}_\MC{H}, q^{\prime -}_\MC{H})$.% distance.

\smallskip

To help present our proof, let $\overline{\OP{LT}}(q^{\prime-}_\MC{H},t_c)$ and $\overline{\OP{LT}}(q^{-}_\MC{H},t_c)$ denote the lower-part of $\OP{W}(q^{\prime-}_\MC{H},t_c)$ and $\OP{W}(q^-_\MC{H},t_c)$ that correspond to ${\OP{LT}}(q^{\prime-}_\MC{H},t_c)$ and ${\OP{LT}}(q^{-}_\MC{H},t_c)$.
Moreover, let $\overline{q}^{\prime}$ and $\overline{q}$ denote the right endpoint of $\overline{\OP{LT}}(q^{\prime-}_\MC{H},t_c)$ and $\overline{\OP{LT}}(q^{-}_\MC{H},t_c)$, respectively.
See also Fig.~\ref{fig-on-highway-wavefront-lower} for an illustration.
\begin{figure*}[t]
%\fbox
%\begin{minipage}{0.4\textwidth}
\centering
\includegraphics[scale=1.3]{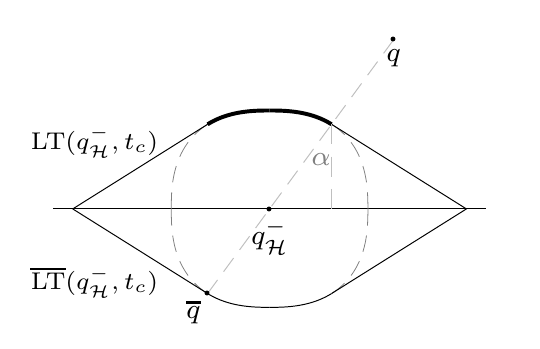}
%\end{minipage}
%\begin{minipage}{0.4\textwidth}
%\centering
%\includegraphics[scale=1]{wr-linfty}
%\end{minipage}
\caption{An illustration of the relative positions between $q$, $q^-_\MC{H}$, and $\overline{q}$.}
\label{fig-on-highway-wavefront-lower}
\end{figure*}
%%

%\smallskip

%
From Lemma~\ref{lemma-wr-general}, the definition of $\alpha$, and the definition of $q^{\prime-}_\MC{H}$, we know that $q^\prime$, $q^{\prime-}_\MC{H}$, and $\overline{q}^\prime$ are collinear.
Therefore, we know that $\overline{\OP{LT}}(q^{\prime-}_\MC{H},t_c)$ is tangent to $\MC{C}^\circ\BIGP{q^\prime, d(q^\prime,q^{\prime -}_\MC{H}) + t_c}$ at $\overline{q}^\prime$.
Similarly, $\overline{\OP{LT}}(q^{-}_\MC{H},t_c)$ is tangent to $\MC{C}^\circ\BIGP{q, d(q,q^{-}_\MC{H}) + t_c}$ at $\overline{q}$.
\begin{figure*}[htp]
%\fbox
%\begin{minipage}{0.4\textwidth}
\centering
\includegraphics[scale=0.9]{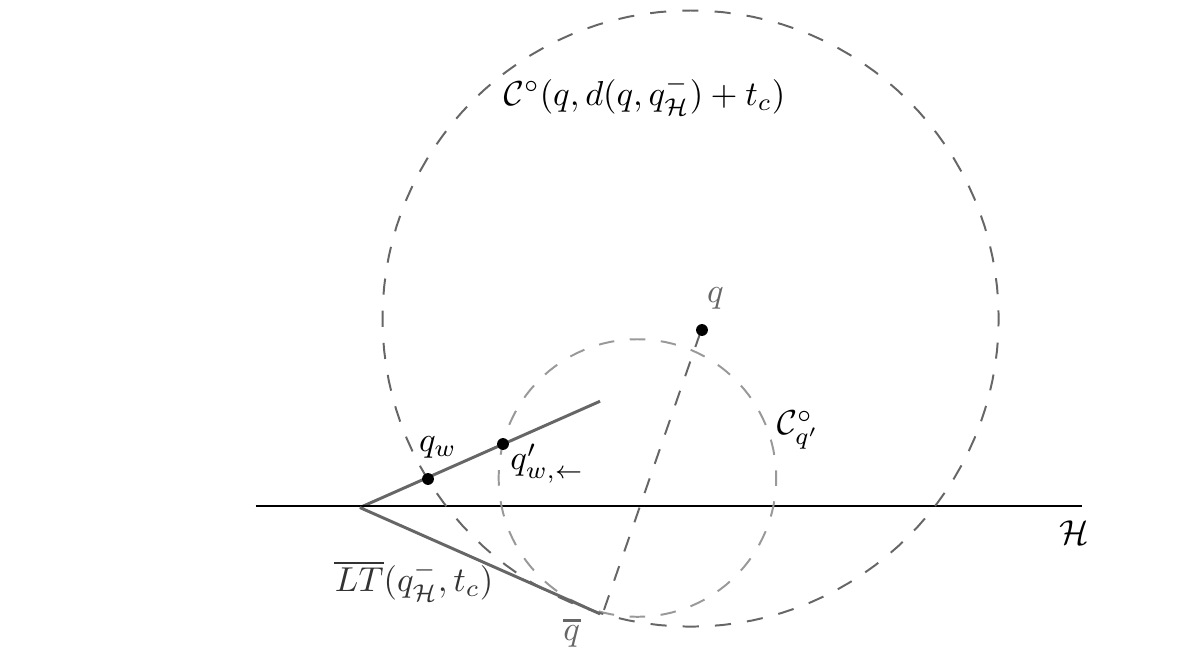}
%\end{minipage}
%\begin{minipage}{0.4\textwidth}
%\centering
%\includegraphics[scale=1]{wr-linfty}
%\end{minipage}
\caption{An illustration on the translated images of $\MC{C}^\circ\BIGP{q^\prime, d(q^\prime,q^{\prime-}_\MC{H})+t_c}$ and $q^\prime_w$, for which the latter is denoted by $q^\prime_{w,\leftarrow}$.}
\label{fig-wr-dominance-finite}
\end{figure*}

Now consider the horizontal translation of $\MC{C}^\circ\BIGP{q^\prime, d(q^\prime,q^{\prime -}_\MC{H}) + t_c}$ to the left-hand side by $d(q^{-}_\MC{H}, q^{\prime -}_\MC{H})$. Let $\MC{C}^\circ_{q^\prime}$ be the translated image.
Since $\overline{\OP{LT}}(q^{\prime-}_\MC{H},t_c)$ is tangent to $\MC{C}^\circ\BIGP{q^\prime, d(q^\prime,q^{\prime -}_\MC{H}) + t_c}$ at $\overline{q}^\prime$, we know that $\overline{\OP{LT}}(q^{-}_\MC{H},t_c)$ is also tangent to $\MC{C}^\circ_{q^\prime}$ at $\overline{q}$, which implies that $\MC{C}^\circ_{q^\prime}$ is completely contained within $\MC{C}^\circ\BIGP{q, d(q,q^{-}_\MC{H}) + t_c}$.
This shows that the translated image of $q^\prime_w$ lies strictly to the upper-right of $q_w$, since they both belong to $\OP{LT}(q^-_\MC{H},t_c)$. This implies that $q_w$ lies to the lower-left of $q^\prime_w$. See also Fig.~\ref{fig-wr-dominance-finite} for an illustration.

\medskip

The other half, that $\OP{WR}_r(q^\prime)$ lies to the left of $\OP{WR}_r(q)$, can be proved by a symmetric argument. This proves the lemma.

\end{proof}

%%%%%

\medskip

\subsection{Closure and Time-Convex Hull of a Point Set.}

\begin{ap_lemma}{\ref{lemma-closure-pair-of-points}}
Let $q_1, q_2 \in \MBB{R}^2$ be two points, when the highway is not used, the set of all shortest time-paths between $q_1$ and $q_2$, i.e., the closure of the set $\MB{S}=\BIGBP{q_1,q_2}$ is:
\begin{compactitem}
	\item
		The smallest bounding rectangle of $\MB{S}$, when $p = 1$.
	
	\item
		The straight line segment $\overline{q_1q_2}$ joining $q_1$ and $q_2$, when $1<p<\infty$.
		
	\item
		The smallest bounding parallelogram whose slopes of the four sides are $\pm 1$, i.e., a rectangle rotated by $45^\circ$, that contains $\MB{S}$.
\end{compactitem}
\end{ap_lemma}

\begin{proof}[Proof of Lemma~\ref{lemma-closure-pair-of-points}]
Without loss of generality, assume that $q_1 = (x_1,y_1)$ and $q_2 = (x_2,y_2)$ such that $x_1\le x_2$ and $y_1 \le y_2$.

\smallskip

For the case $p = 1$, observe that both the paths $q_1 \rightarrow (x_1,y_2) \rightarrow q_2$ and $q_1 \rightarrow (x_2,y_1) \rightarrow q_2$ are shortest time-paths between $q_1$ and $q_2$ since their lengths are equal to $d_1(q_1,q_2)$.
Therefore, the closure of $\MB{S} = \BIGBP{q_1,q_2}$ contains every shortest time-path between $(x_1, y)$ and $(x_2, y)$ for $y_1 \le y \le y_2$. This gives the bounding box of $\MB{S}$, for which we denote by $B_1(\MB{S})$ for brevity.
Furthermore, it is not difficult to see that, for each $q, q^\prime \in B_1(\MB{S})$, $\OP{Bisect}(q,q^\prime) \subseteq B_1(\MB{S})$, since the unit-circles in $L_1$ are diamonds.
Therefore the closure of $\MB{S}$ is $B_1(\MB{S})$.

\smallskip

Since the unit-circles of $L_\infty$, squares, are exactly the unit-circles of $L_1$ being rotated by $45^\circ$, we obtain the claimed result for $p=\infty$ by rotating the coordinate system by $45^\circ$.

\smallskip

For $1<p<\infty$, first observe that, any tangent line of a $p$-circle intersects exactly one point. Furthermore, there exists one-to-one correspondence between the slope of a line and a pair of extreme-points from the $p$-circle whose joined line segment passes through the center.
Therefore, when the two $p$-circles with center $q_1$ and $q_2$ meet for the first time, they adopt a common tangent line.
%
%Moreover, the line orthogonal to $L_t$ 
This shows that, $q_1$, $q_2$, and the point at which the two $p$-circles meet are co-linear.
The same conclusion holds for further bisecting points. Therefore the shortest time-path between $q_1$ and $q_2$ is the straight line-segment between them, and it is unique.
This shows that the straight line-segment is also the closure of $\MB{S} = \BIGBP{q_1,q_2}$.
\end{proof}

%%%%%

\medskip

%%
%%

%%%%%%%%%%%%%%%%%%%%%%%%%%%%%%%%%%%%%%%
%%
%%

\section{Constructing the Time-Convex Hull}

\subsection{Problem Complexity}

\begin{ap_lemma}{\ref{lemma-lower-bound}}
$y_0\BIGP{p,\epsilon}$ is well-defined for all $p$ with $1\le p\le \infty$ and all $\epsilon > 0$. Furthermore, the answer to the minimum gap problem on $\MC{I}$ is ``yes'' if and only if the number of clusters in the time-convex hull of $\MB{S}$ is less than $n$.
\end{ap_lemma}

\begin{proof}[Proof of Lemma~\ref{lemma-lower-bound}]
First we argue that $y_0(p,\epsilon)$ is well-defined. Let $v_\MC{H} = \infty$. Since for all value of $p$, the right discriminating curve, $f_{q(y)}$, has a one-on-one correspondence with each value of $y$, $f^-1_{q(y)}$ is well-defined and continuous, spanning value from $0$ to infinity as $y$ goes from $0$ to infinity. Therefore for all possible $\epsilon$, we can always find a $y_0(p,\epsilon)$.
% By Lemma~\ref{lemma-wr-general}, the right discriminating curve of the walking-region of a point $q = (x_q,y_q)$ for different values of $p$ is given by
The correctness of this reduction follows directly as testing whether or not a point belongs to the walking region of another is equivalent to the testing of the gap-length.
\end{proof}

%%%%

\medskip

%%

%%
%%

%%%%%%%%%%%%%%%%%%%%%%%%%%%%%%%%%%%%%%%
%%
%%

\subsection{An Optimal Algorithm}

\paragraph{Regarding the Point Inclusion Test.}

An approach has been proposed to resolve this step efficiently, e.g., Yu and Lee~\cite{Yu:2007:TCH:1270397.1271493}, and Aloupis et al.~\cite{Aloupis:2010:HHR:1598101.1598676}.
%, and Aloupis et al.~\cite{Aloupis:2010:HHR:1598101.1598676}. 
Below we sketch how it efficiently locates the leftmost cluster that has to be merged with the incoming point $q_i$.

\smallskip

The main idea is to store $\OP{WR}_r(C_j)$ for each cluster $C_j$ created so far by a linked-list, followed by using a linear scan on these lists to retrieve the smallest index for each $q_i$.
%
%The bases to t
The correctness of this approach 
%come from
relies primarily on the following facts:
\begin{compactitem}
	\item
		For each $1\le j, j^\prime \le k$, $j\ne j^\prime$, $\OP{WR}_r(C_j)$ intersects $\OP{WR}_r(C_{j^\prime})$ at most once.
	\item
		If $q_i \in \OP{WR}_r(C_j)$, then all the clusters in between, i.e., $C_{j+1}, \ldots, C_k$, have to be merged as well in order to satisfy the closure property of time-convexity.
\end{compactitem}
What really helps is the \emph{dominance property} implied by the above two properties: Let $C_j$ and $C_{j^\prime}$ be two clusters such that (1) $C_j$ lies to the left of $C_{j^\prime}$ and (2) $\OP{WR}_r(C_j)$ intersects $\OP{WR}_r(C_{j^\prime})$ at a point $q$. Then the part of $\OP{WR}_r(C_{j^\prime})$ to the right of $q$ is not important any more as it lies above $\OP{WR}_r(C_j)$ and whenever a point belongs to $\OP{WR}_r(C_j)$, $C_{j^\prime}$ has to be merged as well.
This allows us to concentrate on the rightmost boundary of the walking-regions and enables a smooth transition when a new point arrives.

\smallskip

Let $\OP{WR}_r\BIGP{\MC{C}}$ denote the rightmost boundary of the union of all walking-regions of clusters in $\MC{C}$.
During the algorithm, we maintain a set of ordered linked-lists where in each linked-list we store a partial boundary of the walking-regions that belonged to $\OP{WR}_r\BIGP{\MC{C}}$ at a certain moment.
When a new point $q_i$ arrives, we walk the linked-list $\OP{WR}_r\BIGP{\MC{C}}$ to locate the piece that lies above $\OP{WR}_r(q_i)$.
This specific piece of $\OP{WR}_r\BIGP{\MC{C}}$, denoted $\hat{\OP{WR}}_r\BIGP{\MC{C}}$, is then replaced by the partial piece of $\OP{WR}_r(q_i)$ that lies below.
The replaced piece, $\hat{\OP{WR}}_r\BIGP{\MC{C}}$, is inserted to the tail of the ordered linked-lists we maintained as a new list.
Then we walk the ordered linked-lists, starting from the tail list, to check if $q_i$ belongs to the walking-region of any cluster and to locate the smallest index if it exists.
As the new point always comes in $x$-monotonic order, line segments and parabolae that are expired with respect to $q_i$ will be discarded during the walk.
This walk is continued until we reach a list whose boundary lies above $q_i$, i.e., it intersects with the vertical ray shooting upward from $q_i$.
%
%\mong{updating the lists when clusters are merged.}
We omit the remaining details and refer the reader to the work of~\cite{Yu:2007:TCH:1270397.1271493,Aloupis:2010:HHR:1598101.1598676}.
% for further reference.

%

\bigskip

%

%%%
%%%

\begin{ap_theorem}{\ref{thm-opt-algo}}
Provided that the one-sided segment sweeping query can be answered in $Q(n)$ time using $P(n)$ preprocessing time and $S(n)$ storage, the time-convex hull for a given set $\MB{S}$ of $n$ points under the given $L_p$-metric can be computed in $\MC{O}\BIGP{n\log n + nQ(n) + P(n)}$ time using $\MC{O}\BIGP{n+S(n)}$ space.
\end{ap_theorem}

\begin{proof}[Proof of Theorem~\ref{thm-opt-algo}]
It remains to argue that 
%To see that
the time complexity is $\MC{O}\BIGP{n\log n + nQ(n) + P(n)}$.
Observe that each cluster-merging operation results in at most one newly-created edge. Therefore at most $\MC{O}(n)$ edges is identified and inserted into $\MB{E}$. Furthermore, after each segment query invoked by the algorithm,
%invokes in the edge-inclusion test, 
either one edge is removed from $\MB{E}$, or the number of clusters is decreased by one. Therefore at most $\MC{O}(n)$ query is performed. This proves this theorem.
\end{proof}

%%

%%
%%

%%%%%%%%%%%%%%%%%%%%%%%%%%%%%%%%%%%%%%%
%%
%%

%\paragraph{Regarding the One-Sided Segment Sweeping Query.}

%

%\medskip

%

\begin{ap_lemma}{\ref{lemma-segment-query-pruning}}
Let $\MC{I}$ be an interval, $\MC{C}\colon \MC{I} \rightarrow \MBB{R}$ be a convex function, i.e., $\MC{C}\BIGP{\frac{1}{2}\BIGP{x_1+x_2}} \ge \frac{1}{2}\BIGP{\MC{C}(x_1)+\MC{C}(x_2)}, \forall x_1,x_2 \in \MC{I}$, that is differentiable almost everywhere,
% such that $\MC{C}^\prime(x)$ is non-increasing almost everywhere in the defining domain, 
$\OP{L}$ 
%\mong{make notations of $\ell$ and $\OP{L}$ consistent?} 
be a segment with slope $\theta_L$, $-\infty < \theta_L < \infty$, and $q=\BIGP{x_q,\MC{C}(x_q)}$ be a point on the curve $\MC{C}$ such that 
$$\lim_{x\rightarrow x_q^-}\frac{\OP{d}\MC{C}(x)}{\OP{d}x} \ge \theta_L \ge \lim_{x\rightarrow x_q^+}\frac{\OP{d}\MC{C}(x)}{\OP{d}x}.$$
If $q$ lies under $\overleftrightarrow{\OP{L}}$, then the curve $\MC{C}$ never intersects $\OP{L}$.
\end{ap_lemma}

\begin{proof}[Proof of Lemma~\ref{lemma-segment-query-pruning}]
Notice that, at any point $x$ where $\MC{C}(x)$ is differentiable, if $\frac{\OP{d}\MC{C}}{\OP{d}x}\BIGP{x} > \theta_L$, then $\MC{C}(x)$ tends to go above $\OP{L}$ as $x$ increases.
Similarly, if $\frac{\OP{d}\MC{C}}{\OP{d}x}\BIGP{x} < \theta_L$, then $\MC{C}(x)$ tends to go below $\OP{L}$ as $x$ increases.
By the convexity of $\MC{C}$, $\forall x_l,x_r \in \MC{I}$ such that $x_l < x_r$, 
we have 
\begin{align}
\lim_{x\rightarrow x_l^+}\frac{\OP{d}\MC{C}}{\OP{d}x}\BIGP{x} \ge \lim_{x\rightarrow x_r^-}\frac{\OP{d}\MC{C}}{\OP{d}x}\BIGP{x}.
\label{ieq-slope-curve}
\end{align}

Suppose that $q$ lies under $\overleftrightarrow{\OP{L}}$ and the curve $\MC{C}$ intersects $\OP{L}$ at a point $q^\prime = \BIGP{x^\prime,\MC{C}(x^\prime)}$. 
If $x_q < x^\prime$, then we know that at some $x^{\prime\prime}$ with $x_q < x^{\prime\prime} \le x^\prime$, $\frac{\OP{d}\MC{C}(x)}{\OP{d}x} > \theta_L \ge \lim_{x\rightarrow x_q^+}\frac{\OP{d}\MC{C}}{\OP{d}x}\BIGP{x}$, which is a contradiction to Ineq.~(\ref{ieq-slope-curve}) and the fact that $x_q < x^\prime$.
For the other case where $x^\prime < x_q$, we obtain a similar contradiction as well.
\end{proof}

\end{appendix}

%%%
%%%
%%%

\end{document}